\documentclass[a4paper,12pt]{article}
\usepackage{authblk}
\usepackage{a4wide}
\usepackage{amsfonts}
\usepackage{epsfig}
\usepackage{amsmath,amssymb,latexsym}
\usepackage{enumerate}
\usepackage{graphicx}
\usepackage{amsthm}
\usepackage{xcolor}

\def\P{{\mathbf P}}
\def\N{{\mathbf N}}
\def\Q{{\mathbf Q}}

\def\H{{\mathbf H}}
\def\K{{\mathbf K}}

\def\emptyset{{\varnothing}}

\newcommand\Model[2][]{\ensuremath{R^{#1}_{#2}}}
\newcommand\Path[1]{\ensuremath{W_{#1}}}

\newtheorem{thm}{Theorem}

\newtheorem{prop}[thm]{Proposition}
\newtheorem{lem}[thm]{Lemma}

\newtheorem{property}[thm]{Property}

\newtheorem{rem}[thm]{Remark}

\newtheorem*{process}{Replacement process}

\title{On dually-CPT and strong-CPT posets}

\author[a,b]{Liliana Alc\'on}

\author[c]{Martin Charles Golumbic}

\author[a,b]{Noem\'\i \, Gudi\~{n}o}

\author[a,b]{Marisa Gutierrez}

\author[d,e]{Vincent Limouzy}

\affil[a]{Centro de Matem\'{a}tica CMaLP, FCE, Universidad Nacional de La Plata. Argentina}
\affil[b]{CONICET, Buenos Aires, Argentina}

\affil[c]{Department of Computer Science, University of Haifa,
              Mt. Carmel, Haifa, Israel}

\affil[d]{Universit\'e Clermont Auvergne, Clermont
              Auvergne INP, CNRS, Mines Saint-Etienne, \textsc{Limos}, F-63000
              Clermont-Ferrand, France}
\affil[e]{This research was supported by the ANR project GRALMECO (ANR-21-CE48-0004-01)}

\begin{document}

\maketitle

\begin{abstract}
A poset is a containment of paths in a tree (CPT)
if it admits a representation by containment where each element of
the poset is represented by a path in a tree and two elements
are comparable in the poset if and only if the corresponding paths
are related by the inclusion relation. Recently Alc\'on, Gudi\~{n}o and Gutierrez~\cite{ALC-GUD-GUT-2}
introduced proper subclasses of CPT posets, namely dually-CPT, and strongly-CPT.
A poset $\P$ is dually-CPT, if and only if $\P$ and its dual $\P^{d}$ both admit a CPT representation.
A poset $\P$ is strongly-CPT, if and only if $\P$ and all the posets that share the same
underlying comparability graph admit a CPT representation. Where as the inclusion between
Dually-CPT and CPT was known to be strict. It was raised as an open question by  Alc\'on, Gudi\~{n}o and Gutierrez~\cite{ALC-GUD-GUT-2}
whether strongly-CPT was a strict subclass of dually-CPT. We provide a proof that
both classes actually coincide.
\end{abstract}

\section{Introduction}

A poset is called a containment order of paths in a tree (CPT for short)
if it admits a representation by containment where each element of the poset
corresponds to a path in a tree and for two elements $x$ and $y$, we have
$x < y$ in the poset if and only if the path corresponding to $x$ is properly
contained in the path corresponding to $y$.

Several classes of posets are known to admit specific containment models,
for example, containment orders of circular arcs on a circle~\cite{NiMaNa88, RoUr82},
containment orders of axis-parallel boxes in $\mathbb{R}^d$ \cite{GoSc89}, or containment
orders of disks in the plane \cite{BrWi89, Fi88, Fi89} to cite just a few.
All the aforementioned classes, as well as CPT,
generalize the class {CI} of containment orders of intervals on a line~\cite{DU-MI-41}.
It is well known that this class coincides with the class
of $2$-dimensional posets and are also equivalent to the
transitive orientations of permutation graphs~\cite{Go2004}.

In 1984, Corneil and Golumbic observed that a graph $G$
may be the comparability graph of a CPT poset,
yet a different transitive orientation of $G$ may not necessarily
have a CPT representation, (see Golumbic~\cite{GO-84}).
This stands in contrast to poset dimension, interval orders,
unit interval orders, box containment orders, tolerance orders
and others which are comparability invariant.
Golumbic and Scheinerman~\cite{GoSc89}
called such classes \emph{strong containment poset classes}.

Recently, interest in CPT posets has been revived and several
groups of researchers have considered various aspects of this class
\cite{ALC-GUD-GUT-2, AGG-k-tree, GolumbicL21, MajumderMR21}.
Since the CPT posets are not a strong containment class,
Alc\'on, Gudi\~{n}o and Gutierrez~\cite{ALC-GUD-GUT-2}
introduced the study of the subclasses
dually-CPT and strongly-CPT posets.
A poset $\P$ is called \emph{dually-CPT} if $\P$ and its dual $\P^{d}$ admit
a CPT representation. A poset $\P$ is called \emph{strongly-CPT} if $\P$
and all the posets that share the same underlying comparability
graph admit CPT representations. From the definition
it is clear that the class of strongly-CPT posets is included
in the class of dually-CPT posets. Many families of separating examples
are now known between the class of dually-CPT and general CPT posets,
however, concerning the strongly and dually-CPT, it
was left as an open problem for many years to determine whether the
inclusion is strict or if the two classes coincide.

We present in this paper a solution to this question with
the following main theorem.

\begin{thm}
\label{thm:Main}
A poset $\P$ is strongly-CPT if and only if it is dually-CPT.
\end{thm}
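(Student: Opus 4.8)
The forward implication is immediate: a poset and its dual share the same underlying comparability graph, since dualizing reverses every comparability but leaves the undirected comparability relation unchanged. Hence if $\P$ is strongly-CPT then $\P^{d}$, being one of the posets with the same comparability graph as $\P$, admits a CPT representation, so $\P$ is dually-CPT. All the work lies in the converse, which I would prove by induction on $|\P|$ using the modular (substitution) decomposition of the comparability graph $G$ of $\P$.

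The plan is to exploit the fact that the modular decomposition tree of $G$ is an invariant of the comparability graph, so every poset $\Q$ with comparability graph $G$ decomposes along the same tree. Writing $M_1,\dots,M_k$ for the maximal strong modules at the root and $H$ for the quotient, standard modular decomposition theory gives that any transitive orientation $\Q$ of $G$ is a substitution $\hat H[\hat M_1,\dots,\hat M_k]$, where $\hat H$ is a transitive orientation of $H$ and each $\hat M_i$ is a transitive orientation of the subposet induced on $M_i$. First I would record two hereditary facts: an induced subposet of a CPT poset is CPT (simply restrict the paths), and dualizing commutes with taking induced subposets. Consequently each $M_i$ and the quotient $H$ inherit the dually-CPT property from $\P$, and the $M_i$ are strictly smaller than $\P$ since $k\ge 2$.

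Next I would analyze the possible orientations $\hat H$ of the quotient according to the type of the root node, invoking Gallai's theorem on transitive orientations of comparability graphs. If the root is parallel ($H$ empty) or series ($H$ complete), then $\hat H$ is respectively an antichain or a chain, and each of these is trivially CPT in every orientation. If the root is prime, then $H$ is a prime comparability graph, which by Gallai's theorem admits exactly two transitive orientations, namely the quotient of $\P$ and its dual, the quotient of $\P^{d}$; these are induced subposets of $\P$ and of $\P^{d}$ respectively, hence both are CPT by the dually-CPT hypothesis. In every case $\hat H$ is CPT, while by the induction hypothesis each dually-CPT poset $M_i$ is strongly-CPT, so every $\hat M_i$ is CPT as well. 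Assembling $\Q=\hat H[\hat M_1,\dots,\hat M_k]$ then shows $\Q$ is CPT, and since $\Q$ was an arbitrary poset with comparability graph $G$, this proves $\P$ is strongly-CPT.

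The main obstacle is the final assembly step, which requires a substitution-closure lemma for CPT: if the quotient $H$ and each factor $F_i$ admit CPT representations, then so does $H[F_1,\dots,F_k]$. I expect this to be the technical heart of the argument, because inflating an element $v$ of $H$, represented by a single path $P_v$ in a tree, into an entire tree representation of $F_i$ is delicate: every path used for $F_i$ must simultaneously contain all paths representing elements below $v$, be contained in all paths representing elements above $v$, and remain incomparable to all paths representing elements incomparable to $v$. I would handle this by first placing the factor representations into a normalized, rooted form, and then grafting the tree of $F_i$ onto the tree of $H$ within the range occupied by $P_v$, adding private branches so that the internal containments of $F_i$ are faithfully reproduced while no spurious containment with the outside is created. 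Verifying that this grafting preserves exactly the required comparabilities, and nothing more, is where the care is needed.
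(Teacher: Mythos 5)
Your high-level skeleton (modular decomposition, Gallai's theorem giving $\H$ or $\H^{d}$ as the only orientations of a prime quotient, then substitution of factor representations) is the same as the paper's, but the step you defer to a ``substitution-closure lemma for CPT'' is not just the technical heart --- it is \emph{false} as you state it, and this is precisely the difficulty the bulk of the paper exists to circumvent. Concretely: CPT is not closed under substitution. Take $H$ to be the two-element chain $v<u$ and let $F$ be any poset that is CPT but not CI (such posets exist, since the inclusion $CI\subset CPT$ is strict). In $H_{F\rightarrow v}$ the element $u$ lies above all of $F$, and by Lemma~\ref{l:neces} the subposet induced by the closed down-set of $u$ in any CPT poset must be CI; hence $H_{F\rightarrow v}$ is not CPT even though both $H$ and $F$ are. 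The same obstruction defeats your grafting plan: whenever the substituted vertex $v$ has some element $u$ above it, every path of the inflated factor must be contained in the single path $\Path{u}$ of the host tree, so no ``private branches'' are available and the factor is forced to be CI, not merely CPT. Your induction hypothesis only delivers that the factors $\hat M_i$ are CPT, which is too weak; the paper instead proves the stronger fact that all maximal strong modules of a connected dually-CPT poset are CI (Theorem~\ref{t:dually-nprime}, via Lemma~\ref{l:neces}), and then uses that CI is comparability invariant (Remark~\ref{r:strong}) to orient the factors arbitrarily.

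There is a second, independent gap: even once the factors are known to be CI, you cannot substitute them into an \emph{arbitrary} CPT representation of the quotient. As the paper remarks after Theorem~\ref{t:dually-nprime}, a vertex $v_i$ of $\H$ may be represented by a trivial path in every representation one starts from, and then no non-singleton module can be inflated at $v_i$; more subtly, the paths of a module may all terminate against a trivial path of another element, which also blocks inflation. The paper resolves this not by representing $\H$ on its own, but by taking a representation of the full poset $\P$ (which genuinely contains the non-singleton modules), normalizing it through Sections~\ref{sec:TrivPathsModules} and~\ref{sec:ModuleVsTrivPath} --- this is where the dually-CPT hypothesis does real work --- and only then deleting all but one path per module to get $\Model{\H}$, with a separate ad hoc treatment (Lemma~\ref{l:substitution-3}) for the residual ``blocked clique'' case. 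Your proposal acknowledges that care is needed in the grafting but supplies neither the CI strengthening nor the normalization, and without both the argument cannot be completed.
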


To prove our main result we rely on the link between modular decomposition
of the underlying comparability graph and its transitive orientations.
Our strategy consists of considering a dually-CPT poset and proving
that any poset with the same comparability graph also admits a CPT representation.
At first we consider the representation and perform some
modifications to obtain a representation with particular properties. Once
this is done, we rely on the specific structure of modules in dually-CPT
posets, and we provide a method to obtain the representation of
any poset with the same comparability graph.

~\\

The paper is organized as follows: In Section \ref{sec:Def}, we present
the definitions related to posets, CPT and modular decomposition
and recall some fundamental results that we will use throughout
the paper. In Section \ref{sec:TrivPathsModules}, we prove that
for dually-CPT posets it is possible to obtain a representation
where no element of a strong module is represented by a trivial path. Then,
in Section \ref{sec:ModuleVsTrivPath}, we show how to modify a CPT representation
of a dually-CPT poset so that either the paths of a strong module
do not end on a trivial path or the considered module admits
very specific properties.
Finally, in Section~\ref{sec:Substition}, we show how to use an operation called
substitution to prove our main result.

\section{Definitions and notations}
\label{sec:Def}
%%%%%%%%%%%%%%%%%%%%%%%%%%%%%%  BASIC DEFINITIONS  %%%%%%%%%%%%%%%%%%%%%%%%%%%%%%%%%%%%%%%%%%%

A \emph{partially ordered set} or \emph{poset}  is a pair
$\P=(X,P)$ where $X$ is a  finite non-empty set
and $P$ is a reflexive, antisymmetric
and transitive binary relation on $X$. The elements of $X$ are
also called \emph{vertices} of the poset. As usual, we write $x \leq
y$ in $\P$ for $(x,y)\in P$; and $x<y$ in $\P$ when $(x,y)\in P$
and $x\neq y$. If $x<y$ or $y<x$, we say that $x$ and $y$ are
\emph{comparable} in $\P$ and write $x\perp y$. When there is no relationship between $x$ and
$y$ we say that they are
\emph{incomparable} and write $x\parallel y$.
An element $x$ is \emph{covered} by $y$ in \textbf{P},
denoted by $x<:y$ in \textbf{P}, when $x<y$ and there is no
element $z\in X$ for which $x<z$ and $z<y$.
The \emph{down-set} $\{x\in X: x< z\}$  and the \emph{up-set} $\{x\in
X : z< x\}$  of an element $z$ are denoted by $D(z)$ and $U(z)$, respectively. We let
$D[z]=D(z) \cup \{z\}$ and $U[z]=U(z) \cup \{z\}$.
The \emph{dual} of  $\P=(X,P)$ is the poset  $\P^d=(X,P^d)$
where $x\leq y$ in $\P^d$  if and only if  $y \leq x$ in $\P$.

A \emph{containment representation} $\Model{\P}$  or \emph{model} 
of a poset $\P=(X,P)$ maps
each element $x$ of $X$ into a set $\Path{x}$ in such a way that $x <
y$ in $\P$  if and only if  $W_x$ is a proper subset of $W_y$.
We identify the containment representation $\Model{\P}$ with the set family
$\{\Path{x}\}_{x\in X}$.

A poset $\P=(X,P)$ is a  \emph{containment order of paths in
a tree}, or $CPT$ poset for brevity, 
if it admits a containment representation $\Model{\P} = \{W_x\}$ where
every $W_x$ is a path of a tree $T$, which is called the
\emph{host tree} of the model. 
When $T$ is a path, $\P$ is said to be a \emph{containment order
of intervals} or $CI$ poset for short.
(We generally consider a path as the set of vertices that induces it.)

The comparability graph $G_{\P}$ of a poset $\P=(X,P)$ is the
simple graph with vertex set $V(G_{\P})=X$ and edge set
$E(G_{\P})=\{xy: x\perp y\}$. In what follows, a poset $\P$, such
that $G_{\P}$ is complete (resp. without edges), is called a \emph{total order} (resp. an
\emph{empty order}). We say that two posets are
\emph{associated} if their comparability graphs  are isomorphic. A
graph $G$ is a \emph{comparability graph} if there exists some
poset $\P$ such that $G=G_{\P}$.

A \emph{transitive orientation}
$\overrightarrow{E}$ of a graph $G$ is an assignment of one
of the two possible directions, $\overrightarrow{xy}$ or
$\overrightarrow{yx}$,  to each edge $xy\in E(G)$ in such a way that
if $\overrightarrow{xy}\in \overrightarrow{E}$ and
$\overrightarrow{yz}\in \overrightarrow{E}$ then
$\overrightarrow{xz}\in \overrightarrow{E}$. The graphs whose
edges can be transitively oriented are exactly the comparability
graphs \cite{GA-67, GH-HO-62, Go2004}. Furthermore, given a transitive
orientation $\overrightarrow{E}$ of a graph $G$, we let
$\P_{\overrightarrow{E}}$  denote the poset
$(V(G),P_{\overrightarrow{E}})$ where $u<v$ in
$\P_{\overrightarrow{E}}$ if and only if $\overrightarrow{uv}\in
\overrightarrow{E}$. The comparability graph of
$\P_{\overrightarrow{E}}$ is $G$. Thereby, the transitive
orientations of $G$ are put in one-to-one correspondence with the posets whose comparability
graphs are $G$.

Let $\P=(X, P)$ be a poset. A set $M\subseteq X$ is a \emph{module}
(\emph{homogeneous set} \cite{GA-67})
if for every $y\in X-M$, either  $y\perp x$ for all
$x\in M$, or $y\parallel x$ for all $x \in M$.
The whole set $X $ and
the singleton sets $\left\{x\right\}$, for any $x\in X$,
are modules of $\P$. These modules are called \emph{trivial modules}.
A poset $\P$ is \emph{prime} or \emph{indecomposable} if all its modules are trivial. Otherwise
$\P$ is \emph{decomposable} or \emph{degenerate}.
A module $M$ is \emph{strong} if for all modules $M'$
either $M\cap M'=\emptyset$ or
$M\subseteq M'$ or $M'\subseteq M$.

A module (respectively, strong module) $M\neq X$ is called \emph{maximal}
if there exists no module (respectively, strong module)
$Y$ such that $M\subset Y \subset X$.
% \textcolor{red}{ADD DEFINITION OF A STABLE MODULE AND CLIQUE MODULE HERE -- ??? --
% A module $M$ is called \emph{stable} if $M$ is .....}
%% -> DONE in the next paragraph

\begin{thm}(Modular decomposition theorem) \cite{GA-67}
Let $\P=(X, P)$ be a poset with at least two vertices. Then exactly
one of the following three conditions is satisfied:
\begin{enumerate}
	\item [(i)] $G_{\P}$ is not connected and the maximal
	strong modules of $\P$ are the connected components of $G_{\P}$.
	
	\item [(ii)] $\overline{G_{\P}}$ is not connected and the maximal
	strong modules of $\P$ are the connected components of $\overline{G_{\P}}$.
	
		\item [(iii)] $G_{\P}$ and $\overline{G_{\P}}$ are connected. There is some
		$Y\subseteq X$ and a unique partition $\mathcal{S}$ of $X$ such
		that
		\begin{enumerate}
	\item [(a)] $|Y|\geq 4$,
	\item [(b)] $\P\left[Y\right]$ is the biggest prime
	subposet of $\P$ (in the sense that it is not included in any
	other prime subposet),
	\item [(c)]	for every part $S$ of the partition $\mathcal{S}$,
	$S$ is a module of $\P$ and $|S\cap Y|=1$.
\end{enumerate}
 \end{enumerate}
\end{thm}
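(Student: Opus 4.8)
The plan is to characterise the maximal strong modules intrinsically and then read off the three cases from the connectivity of $G_{\P}$ and of its complement. The foundational tool is the \emph{module calculus}: whenever $A$ and $B$ are modules of $\P$ with $A\cap B\neq\emptyset$, both $A\cap B$ and $A\cup B$ are modules, and if moreover $A$ and $B$ \emph{overlap} (each of $A\setminus B$, $B\setminus A$, $A\cap B$ is nonempty) then $A\setminus B$, $B\setminus A$ and $A\triangle B$ are modules as well. Each of these is checked directly from the definition, by verifying for a fixed outside vertex $y$ that it sees the relevant set uniformly. The assertion that \emph{exactly one} of (i),(ii),(iii) holds is then immediate from the classical fact that a graph on at least two vertices and its complement cannot both be disconnected: if $G_{\P}$ is disconnected we are in case (i) and $\overline{G_{\P}}$ is connected; if $G_{\P}$ is connected but $\overline{G_{\P}}$ is not, we are in case (ii); otherwise both are connected and we are in case (iii).

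Next I would record two facts that hold in all cases. First, a vertex $y\notin M$ sees $M$ uniformly in exactly the same way in $G_{\P}$ as in $\overline{G_{\P}}$, so the modules of $\P$ coincide with the modules of the graph $G_{\P}$ and with those of $\overline{G_{\P}}$; this will let me deduce (ii) from (i) by complementation. Second, the maximal (proper) strong modules partition $X$: two strong modules sharing a vertex must be nested, so the proper strong modules through a fixed $x$ form a chain under inclusion with a unique maximal element $M_x$, and two such maximal elements are equal or disjoint. For case (i) I would prove the purely graph-theoretic statement that in any disconnected graph each connected component $C$ is a maximal strong module. Here $C$ is a module because outside vertices are nonadjacent to all of $C$; it is strong because if a module $B$ overlapped $C$ then, taking $v\in B\setminus C$ (nonadjacent to $C$), every $w\in C\setminus B$ would be nonadjacent to all of $B$, hence to $B\cap C$, splitting $C$ into two nonempty nonadjacent parts and contradicting its connectivity; and it is maximal because any module containing a component is a union of components (the other components being strong), while a proper union of at least two components overlaps another such union. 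Applying this graph lemma to $\overline{G_{\P}}$, whose modules are the same as those of $\P$, yields case (ii).

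For case (iii) I would let $\mathcal{S}=\{M_1,\dots,M_r\}$ be the partition into maximal strong modules, choose a transversal $Y$ meeting each part once, and identify $\P[Y]$ with the quotient $Q=\P/\mathcal{S}$. The heart of the matter is that $Q$ is prime. The key step is a lifting lemma: if $N$ is a strong module of $Q$ then $\widehat{N}=\bigcup_{M\in N}M$ is a strong module of $\P$. Indeed $\widehat{N}$ is a module, and since the parts $M_i$ are strong, any module $B$ of $\P$ lies inside some $M_i$, contains $M_i$, or is disjoint from $M_i$, for every $i$; hence $B$ either misses $\widehat{N}$ or descends to a module of $Q$, so an overlap of $\widehat{N}$ with $B$ would force an overlap of $N$ in $Q$. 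A nontrivial strong module of $Q$ would therefore lift to a proper strong module of $\P$ strictly containing some $M_i$, contradicting maximality; thus $Q$ has \emph{no} nontrivial strong module. Because two distinct parts are either completely comparable or completely incomparable, comparabilities project faithfully onto $Q$, so in case (iii) both $G_Q$ and $\overline{G_Q}$ are connected.

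Finally I would show that a poset whose comparability graph and whose complement are both connected and which has no nontrivial strong module must be prime. If it had a nontrivial module, then not all nontrivial modules could be pairwise nested or disjoint, for a laminar family of modules consists entirely of strong modules; so two nontrivial modules overlap, and running their union, intersection and differences back through the module calculus must either manufacture a nontrivial strong module or expose a homogeneous splitting that disconnects $G_Q$ or $\overline{G_Q}$ — in every case a contradiction. This overlap bookkeeping is where I expect the main difficulty to lie, since one must track how an overlapping family of modules closes up without collapsing the two connectivity hypotheses. Once $Q$ is prime, $|Y|=r=|V(Q)|\geq 4$, because no poset on at most three vertices has both its comparability graph and complement connected, and condition (b) holds since contracting each maximal strong module to a point produces the largest prime subposet. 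Uniqueness of $\mathcal{S}$ is then automatic, as $\mathcal{S}$ is the canonically determined family of maximal strong modules.
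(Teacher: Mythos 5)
The paper does not prove this statement at all---it is quoted verbatim from Gallai \cite{GA-67}---so your proposal has to be judged on its own merits. Most of your skeleton is sound and follows the classical route: the module calculus, the trichotomy via the fact that a graph and its complement cannot both be disconnected, the component argument for case (i) and its complementation for case (ii), the partition of $X$ by maximal proper strong modules (chains of strong modules through a fixed vertex), and the lifting lemma showing that the quotient $\H$ by that partition has no nontrivial \emph{strong} module are all correct as sketched. But there is a genuine gap exactly where you yourself flag it: you never prove the implication that a poset whose comparability graph and complement are both connected and which has no nontrivial strong module is prime. ``Running the module calculus must either manufacture a nontrivial strong module or expose a homogeneous splitting'' is a hope, not an argument, and without this step case (iii) --- the heart of the theorem --- is unproven. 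The step can be closed as follows. Extend a putative nontrivial module of $\H$ to a maximal proper module $A$; since no nontrivial module is strong, $A$ overlaps some module, and the maximal proper module $B$ containing it satisfies $A\neq B$ and $A\cap B\neq\emptyset$, so $A\cup B$ is a module strictly containing $A$, whence $A\cup B$ is the whole vertex set by maximality. Put $C=A\setminus B$, $D=B\setminus A$, $E=A\cap B$, all nonempty. Each $c\in C$ lies outside the module $B=D\cup E$ and so sees it with one type $t_c$ (comparable or incomparable); each $d\in D$ sees $A=C\cup E$ with one type $t_d$. Fixing $d$ gives $t_c=\mathrm{type}(c,d)=t_d$ for every $c$, so all these types coincide with a single type $t$, and every pair between $E$ and $C\cup D$ has type $t$. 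If $t$ is ``comparable'' then $\overline{G_{\H}}$ has no edge between $E$ and its complement and is disconnected; if $t$ is ``incomparable'' then $G_{\H}$ is disconnected --- contradiction either way. Hence maximal proper modules are pairwise disjoint, hence strong, so a nontrivial module would sit inside a nontrivial strong module, which you have excluded.

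Two further points are asserted where the statement demands proof. For (b), maximality of $\P[Y]$ among prime subposets is not ``automatic from contraction'': if $Y\subsetneq Z$ with $\P[Z]$ prime, pick $z\in Z\setminus Y$ in the part $M_i$; then $M_i\cap Z$ has at least two elements, is a module of $\P[Z]$, and is proper because $Z$ meets other parts --- contradicting primality. For the uniqueness of $\mathcal{S}$, canonicity of the maximal strong modules does not by itself rule out a \emph{different} partition satisfying (a)--(c); one must argue that every proper module of $\P$ lies inside a single part of $\mathcal{S}$ (otherwise it is a union of at least two parts and projects to a nontrivial module of the prime quotient), so each part of a competing partition $\mathcal{S}'$ lies in some $M_i$, and strict containment would put two parts of $\mathcal{S}'$, hence two points of its transversal $Y'$, inside one $M_i$, creating a nontrivial proper module of the prime $\P[Y']$. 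These two repairs are routine, unlike the primality step above, but they are needed for the theorem as stated.
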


The previous theorem defines a partition $\mathcal{M}(\P) = \{M_1,...,M_k\}$  of $X$,
which is  called the \emph{canonical partition} or
\emph{maximal modular partition} of $\P$.
In the first case,  $G_{\P}$ is said to be \textit{parallel} or \emph{stable} and  the partition is formed by
the vertices
of the connected components of $G_{\P}$. In the second case,  $G_{\P}$ is \emph{series} or \emph{clique} and
the partition is formed by the vertices of each connected component of $\overline{G_{\P}}$.
And, in the last case,
$G_{\P}$ is \emph{neighborhood} or \emph{prime}, and the partition is $\mathcal{S}$.

The \emph{quotient poset} of $\P$, denoted by $\P/\mathcal{M}(\P)$,  has a vertex $v_i$ for
each  part $M_i$ of $\mathcal{M}(\P)$; and two
vertices $v_i$ and $v_j$ of $\P/\mathcal{M}(\P)$ are comparable 
if and only if for all  $x\in M_i$
and for all $y\in M_j$, $x \perp y$ in $\P$.

The quotient poset  is \emph{empty} (iff $G_{\P}$ is parallel), a \emph{total order} (iff
$G_{\P}$ is series) or \emph{indecomposable} (iff $G_{\P}$ is neighborhood).

On some occasions, when referring to a module, we will mean the subposet induced by it. For
instance,
we will say that a module $M$ of $\P$ is $CI$ or that it is prime, meaning that $\P(M)$ is.
This will be clear from the context and will  cause no confusion.

%+++++++++++++++++++++++++++++++++++++++++++++++++++++++++++++++++
\begin{thm}\cite{GA-67} \label{t:indecommposable-poset}
Given posets $\P$ and $\P'$, if $G_{\P}=G_{\P'}$ and
$\P$ is indecomposable, then $\P'=\P$ or $\P'=\P^d$.
\end{thm}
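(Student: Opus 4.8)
The plan is to prove Theorem~\ref{t:indecommposable-poset}, which asserts that an indecomposable poset is determined up to duality by its comparability graph. Since this is a statement about transitive orientations, I would first translate it into that language using the correspondence recalled above: posets $\P'$ with $G_{\P'} = G_{\P}$ are in bijection with transitive orientations of $G = G_{\P}$, so it suffices to show that a comparability graph whose associated poset is indecomposable admits exactly two transitive orientations, and that these are reverses of one another.

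First I would set up the key combinatorial tool governing when two edges are forced to receive a consistent orientation. Two edges sharing a vertex, say $xy$ and $yz$ with $xz \notin E(G)$, must in any transitive orientation be oriented ``consistently'' at $y$ (either both toward $y$ or both away from $y$); this is the standard \emph{forcing} or $\Gamma$-relation on the edges of a comparability graph. I would define the reflexive-transitive closure of this local forcing relation, partitioning $E(G)$ into \emph{implication classes}: once the direction of one edge in a class is chosen, the directions of all edges in that class are determined. The crux is then to understand the implication classes of $G$ and relate them to the modular structure of $\P$.

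The heart of the argument is to show that indecomposability forces $G$ (assumed connected, as the interesting case) to have a single implication class on its whole edge set, so that a single binary choice — orient one edge one way or the other — determines the entire orientation, yielding exactly the two orientations $\P$ and $\P^d$. The contrapositive is the cleaner route: I would show that if $G$ has two or more implication classes (equivalently, if some edge can be reversed independently of another), then $\P$ possesses a nontrivial module. Concretely, one examines the partition of the vertex set induced by the implication classes; a classical result in the theory of comparability graphs is that the vertex sets spanned by the implication classes, or the ``color classes'' they determine, give rise to a nontrivial module whenever there is more than one class (or whenever a class fails to span all of $V$). This would contradict the hypothesis that $\P$ is indecomposable, so $G$ has a unique implication class, the orientation is rigid up to a global flip, and the only two posets with comparability graph $G$ are $\P$ and $\P^d$.

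The main obstacle I anticipate is precisely the link between the implication (forcing) classes of $G$ and the modules of $\P$: making rigorous the claim that more than one implication class yields a nontrivial module. Handling this carefully requires the structural results on comparability graphs — the Gallai / Golumbic theory of $\Gamma$-relations and the so-called \emph{modular} or \emph{color} classes — and dealing with the disconnected and co-disconnected cases via the modular decomposition theorem recalled above (in those cases the connected components, respectively co-components, are themselves modules, so indecomposability directly restricts the structure). Once the correspondence between implication classes and modules is established, the conclusion is immediate, so I would devote most of the effort to that correspondence and treat the final deduction as a short wrap-up.
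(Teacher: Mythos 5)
The paper offers no proof of this theorem: it is quoted verbatim from Gallai \cite{GA-67}, so the only meaningful comparison is with the classical argument, and your outline is indeed that classical argument. The reduction you set up is correct: posets with comparability graph $G$ correspond to transitive orientations of $G$; the $\Gamma$ (forcing) relation and its transitive closure partition the arcs into implication classes; and if the whole edge set forms a single color class $A\cup A^{-1}$, then any transitive orientation must contain $A$ or $A^{-1}$ entirely, hence equals $A$ or $A^{-1}$, and both are transitive (the reverse of a transitive orientation is transitive), giving exactly the two posets $\P$ and $\P^d$.

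There are, however, two problems at the crux. First, a claim that is false as written: you propose to show that ``if $G$ has two or more implication classes \ldots then $\P$ possesses a nontrivial module.'' In any comparability graph with at least one edge, each implication class $A$ is disjoint from its reverse $A^{-1}$, which is itself an implication class; so \emph{every} such graph, including prime ones, has at least two implication classes. For example $P_4$ is indecomposable yet its arcs split into exactly two implication classes, $A$ and $A^{-1}$. The dichotomy you need is on \emph{color classes} $A\cup A^{-1}$ ($P_4$ has one); your parenthetical ``some edge can be reversed independently of another'' is the color-class notion, so the intent is right, but the statement to be proved must be ``two or more color classes (for $G$ connected and co-connected) implies a nontrivial module,'' with the disconnected and co-disconnected cases dispatched by the modular decomposition theorem as you indicate. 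Second, that corrected statement is not a routine fact you can leave as a black box: it \emph{is} the substance of Gallai's theorem. Appealing to ``the Gallai/Golumbic theory of $\Gamma$-relations'' to supply it makes the argument circular as a self-contained proof (no worse than the paper, which simply cites \cite{GA-67}, but then your whole outline reduces to that same citation). To genuinely close the gap you would need to carry out the structural step yourself --- e.g., via the Triangle Lemma of \cite{Go2004}, showing that when $G$ is connected and co-connected with more than one color class, the vertex span of a suitable color class is a proper nontrivial module --- and this is where essentially all of the work of the theorem lies.
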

%+++++++++++++++++++++++++++++++++++++++++++++++++++++++++++++++++

%+++++++++++++++++++++++++++++++++++++++++++++++++++++++++++++++++
\begin{prop}\cite{GA-67} \label{p:associated-posets}
Given posets $\P$ and $\P'$, if $G_{\P}=G_{\P'}$, then $\P$ and
$\P'$ have the same strong modules and, consequently,
$\mathcal{M}(\P)=\mathcal{M}(\P')$.
\end{prop}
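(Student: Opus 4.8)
The plan is to observe that the entire notion of a module is phrased purely in terms of the comparability and incomparability relations, and that these relations are recorded exactly by the edge set of the comparability graph. Concretely, for distinct $x,y$ we have $x\perp y$ in $\P$ precisely when $xy\in E(G_{\P})$, and $x\parallel y$ precisely when $xy\notin E(G_{\P})$; the orientation of an edge (whether $x<y$ or $y<x$) plays no role whatsoever in the definition of a module. First I would make this explicit: a set $M\subseteq X$ is a module of $\P$ if and only if, for every $y\in X\setminus M$, the vertex $y$ is either adjacent in $G_{\P}$ to every vertex of $M$ or to no vertex of $M$ — a condition depending only on $G_{\P}$.

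From the hypothesis $G_{\P}=G_{\P'}$ (same vertex set $X$, same edge set) it then follows immediately that $x\perp y$ in $\P$ if and only if $x\perp y$ in $\P'$, and hence that $M$ is a module of $\P$ if and only if it is a module of $\P'$. In other words, the two posets have exactly the same family of modules. Next I would feed this into the definition of a strong module: $M$ is strong precisely when it is laminar with respect to the whole family of modules, that is, for every module $M'$ one has $M\cap M'=\emptyset$, $M\subseteq M'$, or $M'\subseteq M$. Since this condition quantifies only over the family of modules, and that family is identical for $\P$ and $\P'$, the strong modules of $\P$ and of $\P'$ coincide as well.

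Finally, the canonical partition $\mathcal{M}(\P)$ supplied by the modular decomposition theorem is exactly the set of maximal strong modules of $\P$ (the maximal elements, other than $X$, of the family of strong modules ordered by inclusion). As the families of strong modules agree and maximality is an inclusion-theoretic property of that common family, the maximal strong modules agree, giving $\mathcal{M}(\P)=\mathcal{M}(\P')$. The argument is essentially a definition-unwinding, so I do not expect any substantial obstacle; the only point requiring genuine care is the opening remark that modularity is insensitive to edge orientation, since it is precisely this observation that decouples the combinatorial structure of the modules from the particular transitive orientation of $G_{\P}$ that realizes each of $\P$ and $\P'$.
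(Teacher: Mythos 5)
Your proof is correct: modules (and hence strong modules) are defined purely in terms of the symmetric comparability relation $\perp$, so the family of modules is an invariant of $G_{\P}$, strongness quantifies only over that family, and the canonical partition consists of the maximal proper strong modules. The paper states this proposition without proof, citing Gallai, so there is no in-paper argument to compare against; your definition-unwinding is exactly the standard justification, the only point deserving a remark being the prime case (iii) of the modular decomposition theorem, where the identification of the unique partition $\mathcal{S}$ with the set of maximal strong modules is not spelled out in the theorem's statement (unlike cases (i) and (ii)) but is the standard fact implicit in the paper's term \emph{maximal modular partition}, and could alternatively be bypassed by noting that the defining conditions on $\mathcal{S}$ --- each part being a module meeting the biggest prime subposet in one vertex --- are themselves expressed solely in terms of $G_{\P}$, so uniqueness of $\mathcal{S}$ forces $\mathcal{S}(\P)=\mathcal{S}(\P')$.
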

%++++++++++++++++++++++++++++++++++++++++++++++++++++++++++++++++++

Given a vertex $v$ of a poset $\P=(X, P)$ and a poset $\H=(X_1, H)$, \emph{substituting or
replacing} $v$ by $\H$ in $\P$ results in the poset
$\P_{\H\rightarrow v}=\left(X-\{v\}\cup X_1, P_{\H\rightarrow v}\right)$ such that
$P_{\H\rightarrow v}=P-\{(x,y) : x=v \vee y=v\} \cup H \cup \{(x,y):x\in X_1 \wedge  y\in
U(v)\}$ $\cup \{(x,y):y\in X_1 \wedge x\in D(v)\}$.

%++++++++++++++++++++++++++++++++++++++++++++++++++++++++++++++++++
\begin{thm} \label{t:orientations}
Let $\mathcal{M}(\P) = \{M_1, . . . , M_k\}$ be the maximal modular partition of a connected poset
$\P=(X, P)$ whose quotient is prime, and call $\H$ the
quotient poset $\P/\mathcal{M}(\P)$.  A poset $\Q$ is associated to $\P$ if and only if there
exist posets  $\Q_i$ for $1\leq i \leq k$ such that
$\Q_i$ is associated to $\P_i=\P(M_i)$ for each $i$, and $\Q$ is obtained by replacing each
vertex $v_i$ of $\H$ by the poset $\Q_i$ or replacing each vertex $v_i$ of $\H^d$ by the poset
$\Q_i$.
\end{thm}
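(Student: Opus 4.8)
The plan is to prove Theorem~\ref{t:orientations} by establishing the two directions of the equivalence, leveraging the modular decomposition machinery already set up, in particular Theorem~\ref{t:indecommposable-poset} and Proposition~\ref{p:associated-posets}. The central observation is that since $\H = \P/\mathcal{M}(\P)$ is prime, its comparability graph is itself prime, and so by Theorem~\ref{t:indecommposable-poset} the only posets associated to $\H$ are $\H$ and $\H^d$. This is the rigidity that forces the quotient structure to be essentially fixed, while all the freedom in reorienting edges must come from within the modules $M_i$.

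For the \textbf{if} direction, I would start with posets $\Q_i$ associated to $\P_i = \P(M_i)$ and form $\Q$ by substituting the $\Q_i$ into the vertices $v_i$ of either $\H$ or $\H^d$. The goal is to show $G_{\Q} = G_{\P}$. The key is that substitution preserves the comparability relation between distinct modules exactly as dictated by the quotient: two elements $x \in X_1$ (inside the $i$-th module) and $y$ (inside the $j$-th module) become comparable in $\Q$ precisely when $v_i \perp v_j$ in the host ($\H$ or $\H^d$), and since $\H$ and $\H^d$ have the same comparability graph as each other and as the quotient of $\P$, this matches the adjacency in $G_{\P}$. Within each module, $G_{\Q_i} = G_{\P_i}$ by hypothesis, so internal edges are preserved. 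Assembling these two facts shows $\Q$ and $\P$ share the same comparability graph, giving that $\Q$ is associated to $\P$.

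For the \textbf{only if} direction, I would take an arbitrary $\Q$ associated to $\P$ and recover the substitution structure. By Proposition~\ref{p:associated-posets}, $\Q$ has the same strong modules as $\P$, hence the same maximal modular partition $\mathcal{M}(\Q) = \mathcal{M}(\P) = \{M_1,\dots,M_k\}$. Define $\Q_i = \Q(M_i)$; since the restriction of the comparability graph to $M_i$ is $G_{\P_i}$, each $\Q_i$ is associated to $\P_i$. It remains to identify the quotient $\Q/\mathcal{M}(\Q)$: because $G_{\Q} = G_{\P}$ the quotient comparability graph is the same prime graph as that of $\H$, so by Theorem~\ref{t:indecommposable-poset} the quotient poset of $\Q$ is either $\H$ or $\H^d$. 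In the first case $\Q$ is obtained by substituting the $\Q_i$ into $\H$, and in the second into $\H^d$, which is exactly the claimed form.

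The step I expect to be the main obstacle is verifying cleanly that the quotient of $\Q$ is well-defined and equals either $\H$ or $\H^d$ — that is, confirming that the comparability of two elements lying in different modules is genuinely determined by (and only by) the orientation chosen between the corresponding quotient vertices, so that the reorientation freedom factors precisely as ``choose $\H$ or $\H^d$ for the quotient, and independently choose an associated $\Q_i$ inside each module.'' This requires carefully combining the module property (every outside vertex relates uniformly to all of $M_i$) with the uniqueness from Theorem~\ref{t:indecommposable-poset}, and checking that the substitution definition $P_{\H \rightarrow v}$ reproduces exactly these cross-module relations without introducing or destroying any comparabilities. The within-module and across-module parts are routine once this decoupling is pinned down.
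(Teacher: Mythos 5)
The paper never proves Theorem~\ref{t:orientations}: it is stated as background (it is essentially Gallai's decomposition theorem for transitive orientations), so there is no in-paper argument to compare yours against, and I can only judge the proposal on its own merits. Your architecture is the standard one and your \emph{if} direction is sound: between distinct parts comparability is all-or-nothing, the substitution formula $P_{\H\rightarrow v}$ creates cross-module comparabilities exactly according to the quotient, $G_{\H}=G_{\H^{d}}$, and inside each part $G_{\Q_i}=G_{\P_i}$, so $G_{\Q}=G_{\P}$.

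The genuine gap is in the \emph{only if} direction, at precisely the step you flag and then set aside as ``routine once this decoupling is pinned down.'' Before you may apply Theorem~\ref{t:indecommposable-poset} to ``the quotient poset of $\Q$,'' you must prove that this quotient \emph{is} a poset, i.e.\ that for any two completely adjacent parts $M_i,M_j$ the orientation in $\Q$ is uniform: either all of $M_i$ lies below all of $M_j$ or vice versa. The module property you invoke --- every outside vertex is comparable to all of $M_i$ or to none --- concerns comparability only, not its direction, and it genuinely does not imply uniformity of direction: in the chain $x_1<y<x_2$ the pair $\{x_1,x_2\}$ is a module of the comparability graph (a triangle) by the paper's definition, yet $y$ lies above $x_1$ and below $x_2$. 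So mixed orientations across a module are possible in general, and ruling them out here must use the specific hypotheses (parts of the canonical partition, prime quotient). That is where the actual content of the theorem sits; without it, $\Q$ need not be a substitution instance of $\H$ or $\H^{d}$ at all.

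The gap can be closed with the tools already in the paper. Suppose $x<_{\Q}y<_{\Q}x'$ with $x,x'\in M_i$ and $y\in M_j$. Pick representatives $y_l\in M_l$ for $l\neq i,j$ and form the transversals $T_1=\{x,y\}\cup\{y_l\}$ and $T_2=\{x',y\}\cup\{y_l\}$. Each of $\Q[T_1]$, $\Q[T_2]$ has comparability graph $G_{\H}$, which is prime, so by Theorem~\ref{t:indecommposable-poset} each equals $\H$ or $\H^{d}$ under the natural identification. They cannot be one of each: they agree on the $k-1\geq 3$ common vertices, and $\H-v_i=(\H-v_i)^{d}$ would force $\H-v_i$ to be an antichain, making $G_{\H}$ a star and hence decomposable (any two leaves form a nontrivial module), contradicting primality. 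But if both equal $\H$ (or both $\H^{d}$), then $x<_{\Q}y$ yields $v_i<v_j$ while $y<_{\Q}x'$ yields $v_j<v_i$ in the same poset, contradicting antisymmetry. Hence orientations between parts are uniform, the quotient of $\Q$ is well defined, and the rest of your argument goes through as written.
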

%++++++++++++++++++++++++++++++++++++++++++++++++++++++++++++++++++

%+++++++++++++++++++++++++++++++++++++++++++++++++++++++++++++++++
\begin{thm}\cite{GA-67} \label{t:strong-property}
A poset $\P$ is $CI$ if and only if the quotient poset and all the maximal strong modules of
$\P$ are $CI$.
\end{thm}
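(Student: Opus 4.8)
The plan is to use the characterization recalled in the introduction, that a poset is $CI$ if and only if it has dimension at most two, i.e. it is the intersection of two of its linear extensions (a \emph{realizer}). I would prove the two implications separately and argue entirely in terms of realizers, which keeps the treatment uniform across the three possible types of the quotient.

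For the forward direction, suppose $\P$ is $CI$ and fix a realizer $L_1 \cap L_2 = P$. Since having dimension at most two is inherited by induced subposets, restricting $L_1$ and $L_2$ to any maximal strong module $M_i$ yields a realizer of $\P(M_i)$, so each module is $CI$. For the quotient, I would choose one representative $x_i \in M_i$ for every part of $\mathcal{M}(\P)$; by the module property, two representatives $x_i, x_j$ are comparable exactly when $v_i, v_j$ are comparable in the quotient, and with the same orientation, so the subposet induced on $\{x_1,\dots,x_k\}$ is isomorphic to $\H = \P/\mathcal{M}(\P)$. Restricting the realizer $L_1, L_2$ to these representatives then exhibits $\H$ as a poset of dimension at most two, hence $CI$.

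For the converse, suppose $\H$ and every $M_i$ are $CI$, and fix realizers $A_1 \cap A_2$ of $\H$ and $B^i_1 \cap B^i_2$ of each $\P(M_i)$. I would build two linear orders $L_1, L_2$ on the ground set of $\P$ by \emph{block substitution}: order the blocks $M_1,\dots,M_k$ according to $A_1$ (respectively $A_2$), and inside the contiguous block $M_i$ order its elements according to $B^i_1$ (respectively $B^i_2$). Each $L_t$ is then a linear extension of $\P$, and it remains to check $L_1 \cap L_2 = P$ by cases. Two elements of the same module stay in a common contiguous block, so their order in $L_1, L_2$ is governed solely by $B^i_1, B^i_2$, whose intersection is $\P(M_i)$. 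For $x \in M_i$, $y \in M_j$ with $i \neq j$: if $v_i < v_j$ (resp. $v_i > v_j$) in $\H$, both $A_1$ and $A_2$ place the blocks in the same order, so $x$ and $y$ are comparable in both $L_t$ with the orientation forced by the substitution; if $v_i \parallel v_j$, then $A_1$ and $A_2$ reverse the two blocks, so $x$ and $y$ are reversed between $L_1$ and $L_2$, hence incomparable in the intersection, exactly as required by the substitution. Thus $L_1 \cap L_2 = P$, so $\P$ has dimension at most two and is $CI$.

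The main obstacle, and the only point where the modular structure is genuinely used, is the cross-module incomparable case: it relies on the defining feature of a dimension-two realizer that an incomparable pair must be ordered oppositely in the two extensions, together with the fact that in a block substitution each module occupies a single contiguous interval of the line so that distinct blocks never interleave. Once these two observations are in place the case analysis is routine, and it is notably uniform over the three types (empty, total order, prime) that the quotient $\H$ may take, so no separate treatment of the modular decomposition cases is needed.
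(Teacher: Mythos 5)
Your proof is correct, but note that the paper itself offers no proof of Theorem~\ref{t:strong-property}: it is quoted as a known result of Gallai~\cite{GA-67}, so there is no internal argument to compare yours against. Your route goes through the Dushnik--Miller characterization of $CI$ posets as the posets of dimension at most two, which the paper recalls in its introduction. Both directions are sound: restricting a realizer $L_1\cap L_2=P$ to induced subposets handles the maximal strong modules, and a system of representatives (one $x_i$ per part of $\mathcal{M}(\P)$, which induces a copy of $\H$ precisely because the parts are modules) handles the quotient; conversely, block substitution of the realizers $B^i_1\cap B^i_2$ into $A_1\cap A_2$ works, and the two points you isolate are exactly the crux, namely that each block occupies a contiguous interval of each linear order $L_t$, so distinct blocks never interleave, and that a realizer reverses every incomparable pair between its two extensions, which makes the cross-module incomparable case come out right. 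Compared with Gallai's original treatment, which is embedded in his general theory of transitive orientations and homogeneous sets, your argument is elementary, self-contained, and uniform over the three quotient types (empty, total, prime), and it in fact establishes the stronger well-known fact that having dimension at most two is closed under substitution. Two small points you leave implicit and should state explicitly: the equivalence between $CI$ and dimension at most two (the paper cites \cite{DU-MI-41, Go2004}), including the standard conversion of the realizer you build into a genuine proper-containment model by intervals (map $x$ to the interval determined by its position in $L_1$ and its reversed position in $L_2$; this gives distinct intervals and proper containments), and the degenerate case $|X|\le 1$, where the canonical partition is not defined and the statement holds vacuously.
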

%++++++++++++++++++++++++++++++++++++++++++++++++++++++++++++++

%++++++++++++++++++++++++++++++++++++++++++++
\begin{lem} \cite{ALC-GUD-GUT-2}
\label{l:neces} If $z$ is a vertex of a $CPT$ poset $\P$ then the subposet
 induced by the  closed  down-set  of $z$ is  $CI$. In particular, if $\P$ is dually-$CPT$,
then also the subposet induced by the closed up-set of $z$ is $CI$.\end{lem}
%++++++++++++++++++++++++++++++++++++++++++++++++++

%++++++++++++++++++++++++++++++++++++++++++++
\begin{rem}\cite{GA-67}
\label{r:strong}  Let  $\P$ and $\P'$ be associated posets. Then,
$\P$ is a $CI$ poset if and only if  $\P'$ is a $CI$ poset. In
particular, $\P$ is a $CI$ poset  if and only if  $\P^d$ is a $CI$
poset.
\end{rem}
%++++++++++++++++++++++++++++++++++++++++++++

%tttttttttttttttttttttttttttttttttttttttttttttttttttttttttttttt
\begin{thm} \label{t:dually-nprime}
Let $\P=(X, P)$ be a connected dually-$CPT$ poset. Then the quotient poset of $\P$ is
dually-$CPT$ and every maximal strong module of $\P$ is $CI$. In particular, if the quotient
poset  is $CI$, then $\P$ is $CI$.
\end{thm}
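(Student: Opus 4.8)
The plan is to reduce everything to two hereditary facts and to Lemma~\ref{l:neces}. Both $CI$ and $CPT$ are preserved under taking induced subposets: restricting an interval family (resp.\ a path family in a host tree) to a subset of the ground set yields again intervals (resp.\ paths), and the containment relations restrict correctly. Write $\mathcal{M}(\P)=\{M_1,\dots,M_k\}$ and $\H=\P/\mathcal{M}(\P)$. Since $\P$ is connected (and has at least two vertices) the modular decomposition theorem puts us in the series case, where $\H$ is a total order, or in the neighbourhood case, where $\H$ is prime; in both cases the comparability graph of $\H$ is connected, so every $v_i$ is comparable in $\H$ to some $v_j$. Recall also that, $\H$ being a poset, $v_i<v_j$ in $\H$ means precisely that $x<y$ in $\P$ for every $x\in M_i$ and every $y\in M_j$.

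First I would show that every maximal strong module $M_i$ is $CI$. Pick $v_j$ comparable to $v_i$ in $\H$ and choose any $z\in M_j$. If $v_i<v_j$ in $\H$, then every element of $M_i$ lies below $z$, so $M_i\subseteq D(z)$ and $\P(M_i)$ is an induced subposet of $\P(D[z])$, which is $CI$ by Lemma~\ref{l:neces} (here I only use that $\P$ is $CPT$); hence $\P(M_i)$ is $CI$. If instead $v_j<v_i$ in $\H$, then $M_i\subseteq U(z)$, and since $\P$ is dually-$CPT$ the subposet $\P(U[z])$ is $CI$ by the second part of Lemma~\ref{l:neces}, so again $\P(M_i)$ is $CI$.

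Next I would prove that $\H$ is dually-$CPT$ by realizing it as an induced subposet of $\P$. Choose one representative $r_i\in M_i$ for each $i$ and set $R=\{r_1,\dots,r_k\}$. For $i\neq j$ the module property gives $r_i\perp r_j$ in $\P$ if and only if every element of $M_i$ is comparable to every element of $M_j$, i.e.\ if and only if $v_i\perp v_j$ in $\H$; moreover the direction agrees with that of $\H$. Thus $v_i\mapsto r_i$ is an isomorphism from $\H$ onto the induced subposet $\P(R)$. Since $\P$ is $CPT$ and $CPT$ is hereditary, $\H\cong\P(R)$ is $CPT$. Applying the same set $R$ inside $\P^d$ (which is $CPT$ because $\P$ is dually-$CPT$), and using that restricting $\P^d$ to $R$ yields exactly $(\P(R))^{d}\cong\H^{d}$, we get that $\H^{d}$ is $CPT$ as well. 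Hence $\H$ is dually-$CPT$.

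Finally, the last assertion is immediate: having shown that every maximal strong module of $\P$ is $CI$, if in addition $\H$ is $CI$, then Theorem~\ref{t:strong-property} yields that $\P$ is $CI$. I expect the only point needing care to be the identification $\H\cong\P(R)$ together with the passage to the dual in the third paragraph, since this is where both halves of the dually-$CPT$ hypothesis on $\P$ are genuinely used; the $CI$-ness of the modules, by contrast, is a direct application of Lemma~\ref{l:neces} once connectivity of $\H$ supplies a comparable neighbour of each part.
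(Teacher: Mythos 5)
Your proposal is correct and follows essentially the same route as the paper: realize $\H$ as an induced subposet of $\P$ (via representatives) to get that it is dually-$CPT$, use connectivity of $\H$ to place each maximal strong module inside a down-set or up-set of an outside element so that Lemma~\ref{l:neces} yields $CI$-ness, and finish with Theorem~\ref{t:strong-property}. The paper's proof is just a terser version of yours; your extra care with the representative map $v_i\mapsto r_i$ and the passage to the dual fills in details the paper leaves implicit.
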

%tttttttttttttttttttttttttttttttttttttttttttttttttttttttttttttt
\begin{proof} Let $\mathcal{M}(\P) = \{M_1, . . . , M_k\}$ be the maximal modular partition of
$\P$. The quotient poset $\H=\P/\mathcal{M}(\P)$ is a subposet of $\P$, so  $\H$ is
dually-$CPT$. We can assume that $\P$ is not empty, and since $\P$ is connected we have that
$\H$ is connected, and so  every
vertex $v_i$ of $\H$ is in the down-set or in the up-set of some other vertex. Which implies
that in $\P$ the whole module $M_i$ is in the up-set or in the down-set of some other vertex.
It follows from  Lemma \ref{l:neces} that each $\P_i=\P(M_i)$ is $CI$. Therefore, by Theorem
\ref{t:strong-property}, if $\H$ is $CI$, then $\P$ is $CI$.\end{proof}

The converse  of Theorem \ref{t:dually-nprime} is not true in general. 
For instance, if in the quotient poset $\H$ there exists a vertex $v_i$ 
such that in any $CPT$ representation of $\H$ the corresponding
path $W_{v_i}$ is reduced
to a vertex, then for $\P$ to be $CPT$ the module $M_i$ has to be a singleton.

%
% Even more, we
% conjecture (at the moment) that
% if $v_j$ is other vertex of the quotient graph such that in any $CPT$ representation of $H$ the
% corresponding path $W_{v_j}$
% shares an end vertex with the one vertex path $W_{v_i}$, the the module $M_j$ has to induce a
% complete graph.
% %In Figure \ref{f:quotient}, we shows an example of such a quotient graph.
%
% A vertex $v$ of a $CPT$ poset $\mathbf{P}$ is \emph{degenerated} if in every $CPT$
% representation of
% $\mathbf{P}$,  $v$ is represented by a  trivial path.
%
% A poset $\mathbf{P}$ with a representation such that every path is non-trivial
% clearly  does not have degenerated vertices. But, if $\mathbf{P}$ does not have
% degenerated vertices, we can not ensure that this poset has a representation
% where each path is not a point. We introduce the following definitions on CPT posets which will
% be useful.
%
% %%%%%ARREGLAR ACA
% A poset $\mathbf{P}$ is \emph{non-degenerated}-$CPT$  if it admits a  $CPT$-representation
% without trivial paths.
%
%
%
% A poset  $\mathbf{P}$  is  \emph{non-degenerated-dually}-$CPT$ if   $\mathbf{P}$  and
% $\mathbf{P}^d$ admit a $CPT$
% representation without trivial paths.
%
% A poset  $\mathbf{P}$  is  \emph{non-degenerated-strong}-$CPT$ if   $\mathbf{P}$  and any other
% poset associated with $\mathbf{P}$
% admit a $CPT$ representation without trivial paths.

In a representation  $\Model{\P}$ of a CPT poset $\P$, 
a subset $X$ of paths of \Model{\P}
is called \emph{one-sided} if all the paths that represent $X$
arrive at a vertex $a$ of the host tree
and all paths of $X$, except possibly one trivial path,
pass through a vertex $b$ of $T$
neighbor of $a$. If all the paths of $X$ arrive at a vertex $a$ and $X$ is not one-sided, then it is called \emph{two-sided}.

Addressing that issue in the proof of the main theorem will requires the following lemmas and
properties.

\begin{property}\cite{DU-MI-41, GO-84}
\label{p:compresing-CI-model}
Every $CI$ poset admits a $CI$ representation
 where the intersection of all the intervals used in
the representation is a non-trivial  interval.

\end{property}

%%%%%%%%%%%%%%%%%%%%%%%%%%%%%%  MODULES and TRIVIAL PATHS
%%%%%%%%%%%%%%%%%%%%%%%%%%%%%%%%%%%%%%%%%%%
\section{Trivial paths into modules}
\label{sec:TrivPathsModules}

The goal of this section is to prove that for any dually-CPT
poset $\P$, there exists a representation $\Model{\P}$ where all
the elements contained in strong modules are represented
by non-trivial paths.

At first we prove that if an element of module
is represented by a trivial path, it does mean that the module
(all its elements) are not greater than any other element not in the module.

\begin{lem}
 Let $\P$ be a poset and let $M$ be a strong module of $\P$. If there
 exists a representation $\Model{\P}$ where an element $x$ of $M$ is represented
 by a trivial path, then all the elements of $M$ are not greater than any element
 of $\P$ not in $M$.
 \label{lem:TrivialPathModule}
\end{lem}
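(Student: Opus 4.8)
The plan is to use the representation only to extract a single structural fact about $x$, and then to run a purely order-theoretic argument driven by the module being \emph{strong}. First I would observe that if $W_x$ is a trivial path, i.e.\ a single vertex $\{a\}$ of the host tree, then no path can be a proper subset of $W_x$, because every path is non-empty. Hence there is no $z$ with $W_z \subsetneq W_x$, that is, no $z$ with $z < x$ in $\P$, so $x$ is a \emph{minimal} element of $\P$. This is the only point at which the hypothesis about $\Model{\P}$ enters the proof.

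Next I would argue by contradiction. Suppose some $m \in M$ is greater than some $y \notin M$, so that $y < m$ in $\P$. Since $y$ is comparable to $m \in M$ and $M$ is a module, $y$ is comparable to every element of $M$; in particular $y \perp x$. Here I would invoke the autonomy (uniform-direction) property of strong modules: a vertex outside a strong module that is comparable to all of its elements must be comparable to them all in the \emph{same} direction, i.e.\ it is either a lower bound or an upper bound of the whole module. Applying this to $y$ and using $y < m$, we conclude that $y$ is below every element of $M$; in particular $y < x$. This contradicts the minimality of $x$ obtained above, and the lemma follows: no element of $M$ is greater than any element outside $M$.

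The step carrying all the weight is the uniform-direction property, and this is precisely where \emph{strongness} (rather than being merely a module) is indispensable; indeed the statement is false for general modules, as the three-element chain $x<y<m$ shows, where $\{x,m\}$ is a non-strong module with $x$ minimal and $m>y$. To justify uniform direction within the framework of the excerpt, I would let $N$ be the smallest strong module containing both $M$ and $y$; such a module exists because strong modules are pairwise nested or disjoint, so the strong modules containing the fixed set $M\cup\{y\}$ form a chain under inclusion. By minimality of $N$, the set $M$ is contained in one part $C^{*}$ of the canonical partition $\mathcal{M}(\P(N))$ while $y$ lies in a different part $C_y$ (otherwise a part smaller than $N$ would contain both $M$ and $y$). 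By the Modular Decomposition Theorem the quotient $\P(N)/\mathcal{M}(\P(N))$ is a genuine poset, so the relation between $C^{*}$ and $C_y$ is uniform; transporting this back, $y$ relates to every element of $M$ in the same direction, exactly as required. I expect the one delicate point to be the verification that $M$ really falls inside a single part of $\mathcal{M}(\P(N))$ and that $y$ avoids it — a hereditary feature of the modular decomposition tree (Gallai~\cite{GA-67}) — while the remainder of the argument is immediate.
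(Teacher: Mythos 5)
Your proposal is correct, and its load-bearing first observation --- a trivial path is a single vertex of the host tree, hence cannot properly contain any non-empty path, so no element of $\P$ lies below $x$ --- is exactly the paper's entire proof. Where you differ is that you close a gap the paper leaves implicit: the negation of the lemma is ``some $m\in M$ exceeds some $z\notin M$,'' yet the paper's proof immediately assumes $z<x$, i.e.\ it silently replaces an arbitrary $m$ by the specific element $x$ carrying the trivial path. That replacement is legitimate only because of the uniform-direction property of \emph{strong} modules, which you state and justify: once $z<m$, the module definition makes $z$ comparable to all of $M$, and strongness forces $z$ to be below all of $M$ or above all of $M$; with $z<m$ this yields $z<x$, contradicting the minimality of $x$. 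Your three-element chain $x<y<m$ with $M=\{x,m\}$ shows the lemma itself (not merely this step) fails for non-strong modules, so strongness is consumed precisely where you place it. Your justification of uniform direction --- taking the smallest strong module $N\supseteq M\cup\{y\}$, noting that $M$ falls inside a single part of the canonical partition $\mathcal{M}(\P(N))$ while minimality of $N$ forces $y$ into a different part, and reading the direction off the quotient poset --- rests only on standard facts of Gallai's theory (nestedness of strong modules, the fact that modules of a module are modules of the whole poset, well-definedness of the quotient order), all consistent with the background the paper cites. In short: same core idea as the paper, but your write-up is the more complete argument; the paper's proof, read literally, only establishes that nothing lies below $x$ itself.
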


\begin{proof}
 Let us proceed by contradiction and let us assume that there exists an element $z \notin M$
 such that $z < x$. Then in any representation \Model{\P} we have $\Path{z}  \subset \Path{x}$
 but
 since $\Path{x}$ is already a trivial path it cannot properly contain some other object.
\end{proof}

Hence from the previous lemma, if in a representation \Model{\P} an element
of a module is represented by a trivial path, the module is a minimal subset of $\P$.

\begin{lem}
\label{lem:ModuleContained}
 Let $M$ be a strong module of a CPT poset $\P$,
 if in a representation $\Model{\P}$ one of its elements
is represented as a trivial path, then there exists an element $x$ not in $M$
such that the path $\Path{x}$ contains all the paths representing the elements of $M$.
\end{lem}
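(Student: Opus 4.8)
The plan is to combine the minimality of $M$ furnished by Lemma~\ref{lem:TrivialPathModule} with the defining property of a module, and then read off the conclusion directly from the containment semantics. First I would record the exact consequence of Lemma~\ref{lem:TrivialPathModule}: since some element of $M$ is represented by a trivial path, no element $z\notin M$ lies below any element of $M$. Equivalently, for every $m\in M$ the down-set $D(m)$ is contained in $M$, so no element of $M$ is strictly above an outside vertex. This is the only place where the trivial-path hypothesis is used, and it is precisely the ``minimal subset'' remark stated just before Lemma~\ref{lem:ModuleContained}.

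Next I would locate a witness $x$. I would take any element $z\notin M$ that is comparable to $M$; such an element exists because $M$ is a proper strong module of a connected poset, so $G_{\P}$ must contain an edge leaving $M$. By the definition of a module, $z$ being comparable to a single element of $M$ forces $z$ to be comparable to \emph{every} element of $M$. Combining this with the minimality recorded in the first step, the relation $z<m$ is impossible for any $m\in M$; hence $m<z$ for all $m\in M$.

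Finally, translating back into the model, $m<z$ in $\P$ means $\Path{m}\subsetneq\Path{z}$ for every $m\in M$, so the single path $\Path{z}$ contains all the paths representing the elements of $M$. Taking $x=z$ then proves the statement.

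The one delicate point is guaranteeing, in the second step, that some outside element is comparable to $M$: for a singleton module, or for a module that forms an entire connected component incomparable to the rest of $\P$, no containing path need exist, so the statement is only meaningful when $M$ is comparable to something outside. This is automatic whenever $\P$ is connected and $M\neq X$, which is the regime in which the lemma is subsequently applied. I expect this existence to be the sole obstacle; once it is secured, the rest of the argument is immediate from the module axiom and the containment definition of the model.
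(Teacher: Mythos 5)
Your proof is correct and follows essentially the same route as the paper's: connectivity of $\P$ yields an outside element comparable to $M$, the module property extends this comparability to all of $M$, and Lemma~\ref{lem:TrivialPathModule} rules out the downward direction, forcing the witness to lie above every element of $M$ so that its path contains all of theirs. The connectivity and $M\neq X$ caveats you flag are real but are equally (and only implicitly) assumed in the paper's own one-line proof, so this is the same argument, just spelled out more carefully.
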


\begin{proof}
 Since the poset is connected, and by the previous lemma, we know that the
 module cannot contain any other element, to ensure the connection outside
 the module, there might be at least one element $x$ that is greater than  every element
 of $M$.
\end{proof}

\begin{lem}
\label{lem:PassingThroughA}
  Let $M$ be a strong module of a CPT poset $\P$.
  If in a representation $\Model{\P}$ one of its elements $z$
is represented as a trivial path, then this path is hosted
on some vertex $a$ of $T$. If for an element $x$ not in $M$  its path $\Path{x}$
passes through $a$, then $\Path{x}$ has to contain all the paths corresponding
to the elements of $M$.
\end{lem}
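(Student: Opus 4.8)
The plan is to reduce everything to the comparability structure forced by the module together with Lemma~\ref{lem:TrivialPathModule}, avoiding any case analysis on the tree itself. First I would fix notation: since $z \in M$ is represented by a trivial path, $\Path{z}$ consists of a single vertex of the host tree $T$; call this vertex $a$. This immediately establishes the first assertion. Now let $x \notin M$ be such that $\Path{x}$ passes through $a$. Since $\Path{x}$ passes through $a$ it contains the vertex $a$ and is strictly larger than the single-vertex path $\Path{z} = \{a\}$; hence $\Path{z} \subsetneq \Path{x}$, which by the definition of a containment representation means $z < x$ in $\P$, and in particular $z \perp x$.

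Next I would invoke the module property. Since $M$ is a module and $x \notin M$, either $x$ is comparable to every element of $M$ or $x$ is incomparable to every element of $M$. As we have just shown $z \perp x$ with $z \in M$, the first alternative must hold, so $x \perp y$ for all $y \in M$.

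To finish, I would apply Lemma~\ref{lem:TrivialPathModule}. Since some element of $M$, namely $z$, is represented by a trivial path, no element of $M$ is greater than any element of $\P$ lying outside $M$. Applying this with the outside element $x$ gives $y \not> x$ for every $y \in M$. Combined with $x \perp y$, this forces $y < x$, and therefore $\Path{y} \subsetneq \Path{x}$ for every $y \in M$. Thus $\Path{x}$ contains all the paths representing the elements of $M$, as claimed.

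The only delicate point is the strict inclusion $\Path{z} \subsetneq \Path{x}$: one must be certain that ``passing through $a$'' really means $\Path{x}$ extends beyond the single vertex $a$, so that the comparability obtained is $z < x$ rather than merely $z \parallel x$ (which would occur if $\Path{x}$ coincided with the trivial path $\{a\}$). Once this is pinned down in accordance with the definition of one-sided and two-sided families given above, the remainder is a direct combination of the module axiom with Lemma~\ref{lem:TrivialPathModule}, and no argument about the shape of the paths in $T$ is required.
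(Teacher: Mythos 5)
Your proof is correct, but it closes the argument with a different key ingredient than the paper does. Both proofs begin identically: the trivial path $\Path{z}=\{a\}$ gives the first assertion, and passing through $a$ gives $\Path{z}\subsetneq\Path{x}$, hence $z\perp x$, hence (by the module property) $x$ is comparable to \emph{every} element of $M$. The divergence is in how the \emph{direction} of all these comparabilities is pinned down. The paper invokes the fact that in a transitive orientation the edges between an outside vertex and a strong module are uniformly oriented, so $\Path{x}\supsetneq\Path{z}$ propagates to all of $M$; note that this uniformity genuinely requires $M$ to be strong (for a non-strong module it fails: two vertices of a triangle form a module, and the third vertex can be oriented strictly between them), a hypothesis the paper states only loosely. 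You instead pin the direction with Lemma~\ref{lem:TrivialPathModule}: since some element of $M$ has a trivial path, no element of $M$ lies above any outside element, so comparability forces $y<x$ for all $y\in M$. Your route buys self-containedness --- it uses only a lemma already proved two steps earlier rather than an external fact from Gallai's theory --- and it is immune to the subtlety about non-strong modules, since strongness enters only through the hypothesis of Lemma~\ref{lem:TrivialPathModule}. What it gives up is symmetry: the paper's uniformity argument would transfer \emph{whatever} relation $x$ has to $z$ (containment in either direction) to all of $M$, whereas your argument is tailored to the downward-bounded situation created by the trivial path. Finally, the delicate point you flag --- that ``passes through $a$'' must mean $\Path{x}$ properly extends beyond $\{a\}$, since $\Path{x}=\{a\}$ would give $x\parallel z$ and void the conclusion --- is real; the paper's proof silently makes the same reading, so flagging it explicitly is a small improvement rather than a defect.
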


\begin{proof}
 From the definition of a module, every element not in the module is either
 completely disconnected from $M$ or completely connected to $M$. In that case,
 if for an element $x$, in a representation $\Model{\P}$ its path $\Path{x}$ passes through
 $a$,
 then it is connected to the element $z$. Hence it has to be connected to every
 element of $M$. In addition, in a transitive orientation of a graph, the containment
 relation between $x$ and the elements of $M$ is the same for every element of $M$.
 Thus if $\Path{x}$ contains $\Path{z}$ it contains all the paths of the elements of $M$.
\end{proof}

\begin{lem}
\label{lem:non-triv}
 Let $M$ be a strong module of a CPT poset $\P$. If in a representation
 $\Model{\P}$ one of its elements
is represented as a trivial path and $M$ is a clique or prime module,
then there exists at least one element of $M$ represented as a non-trivial path.
\end{lem}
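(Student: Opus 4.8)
The plan is to argue by contradiction: assume that \emph{every} element of $M$ is represented by a trivial path in $\Model{\P}$, and show that this forces $\P(M)$ to be an antichain, which is incompatible with $M$ being clique or prime. The starting observation is the defining property of a containment model: for $x,y\in M$ we have $\Path{x}\subsetneq\Path{y}$ if and only if $x<y$ in $\P$. A trivial path is a single vertex of the host tree, and a single vertex cannot properly contain another single vertex. Hence, under the contradiction hypothesis that all of the paths $\{\Path{x}\}_{x\in M}$ are trivial, no proper containment can occur among them, so no two elements of $M$ are comparable. In other words $\P(M)$ would be an empty order, i.e. $G_{\P(M)}$ is edgeless.

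It then remains to rule this out when $M$ is clique or prime, and here I would invoke the Modular decomposition theorem applied to $\P(M)$. If $M$ is a clique (series) module, then $\overline{G_{\P(M)}}$ is disconnected, so the quotient of $\P(M)$ is a total order on at least two parts; any two elements lying in distinct parts are comparable, whence $G_{\P(M)}$ has an edge. If instead $M$ is a prime (neighborhood) module, then $\P(M)$ is indecomposable and falls under case (iii), so $|M|\geq 4$ and $G_{\P(M)}$ is connected; in particular it has an edge. In either case $M$ contains a comparable pair, contradicting the edgelessness obtained above, and the conclusion follows. Equivalently, one may phrase the argument directly: pick any comparable pair $x<y$ of $M$ guaranteed by the clique or prime hypothesis; then $\Path{x}\subsetneq\Path{y}$ forces $\lvert\Path{y}\rvert\geq 2$, exhibiting the desired non-trivial path.

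I expect the only delicate point to be the bookkeeping of what "clique" and "prime" mean at the level of a module, namely invoking the Modular decomposition theorem to certify that neither type can be an antichain, since the empty order is precisely the parallel (stable) type excluded by the hypothesis. Once that trichotomy is pinned down, the implication \emph{all trivial $\Rightarrow$ antichain} is immediate from the definition of a containment representation. The standing assumption that some element of $M$ is already trivial plays no essential role in this particular implication; it merely fixes the setting in which Lemmas~\ref{lem:TrivialPathModule}--\ref{lem:PassingThroughA} apply in the surrounding development.
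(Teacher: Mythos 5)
Your proof is correct. The paper states Lemma~\ref{lem:non-triv} with no proof at all, treating it as immediate; your argument---most cleanly in the direct form you give at the end, namely that a clique or prime module must contain a comparable pair $x<y$ (by the modular decomposition trichotomy, since an all-trivial-paths module would be an empty order, i.e.\ of parallel type), and $\Path{x}\subsetneq\Path{y}$ then forces $\Path{y}$ to have at least two vertices---is exactly the justification the authors evidently intended.
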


In the case of dually-CPT posets, the next three lemmas consider the presence
of trivial paths in a representation of strong modules and  show how
to obtain an equivalent representation where all the elements of the module
are non-trivial paths. For these lemmas, we consider each strong module to
be a CI poset and the element of the module represented by a trivial
path is denoted by $z$.

\begin{lem}
\label{lem:CliqueModuleTrivialPath}
 Let $M$ be a strong CI clique module of a dually-CPT poset $\P$.
 If an element $z$ of $M$ is represented
 by a trivial path in a representation $\Model{\P}$, then there exists a representation
 $\Model[']{\P}$ where $z$ is represented as non-trivial path.
 \end{lem}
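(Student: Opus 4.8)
The plan is to exploit how tightly the earlier lemmas pin down the local picture around the trivial path, and then perform a small surgery on the host tree. Write $\Path{z}=\{a\}$. By Lemma \ref{lem:TrivialPathModule}, no element of $M$ lies above an outside element, so every outside vertex comparable to $M$ lies \emph{above} all of $M$; by Lemma \ref{lem:PassingThroughA} the vertex $a$ is traversed only by the paths of $M$ together with the paths of those outside elements, while every path incomparable to $M$ avoids $a$. Since $\Path{z}$ is a single vertex, nothing can be properly contained in it, so $z$ is a minimal element of $\P$ and hence of $M$. Because $M$ is a clique (series) module, its maximal strong submodules form a chain $N_1<\cdots<N_t$ with $t\geq 2$, and minimality forces $z\in N_1$; picking any $w\in N_2$ gives $z<w$, so $\Path{w}\supsetneq\{a\}$ is non-trivial and contains a neighbour $b$ of $a$.

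The construction I would try first is the minimal one: keep every path of $\Model{\P}$ unchanged and set $\Path[']{z}=\{a,b\}$, choosing $b$ in the common neighbourhood of $a$ shared by all paths $\Path{m}$ with $z<m$ in $M$. With this choice $z$ becomes non-trivial while every up-relation is preserved: if $z<m$ in $M$ then $b\in\Path{m}$ by the choice of $b$, and every outside element above $M$ already contains all of $M$ and hence contains $b$ automatically. For the incomparabilities, note that any path contained in $\{a,b\}$ but avoiding $a$ must equal $\{b\}$, and since $b\in\Path{w}$ this would place that element strictly below $w$; for an outside element incomparable to $M$ this contradicts Lemma \ref{lem:PassingThroughA} and the module property, so no spurious containment with the new $\Path[']{z}$ is created.

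The main obstacle is the existence of a good $b$, that is, a neighbour of $a$ common to all the up-paths of $z$ inside $M$ (the \emph{one-sided} situation), together with the avoidance of a ``twin'' minimal element of $M$ represented exactly by $\{b\}$ and incomparable to $z$, which the naive extension would wrongly push below $z$. When the up-paths diverge at $a$ (the \emph{two-sided} situation) no common $b$ exists. To handle this I would rebuild the whole representation of $M$ from scratch: since $M$ is $CI$, Property \ref{p:compresing-CI-model} supplies an interval model of $\P(M)$ in which all intervals share a common non-trivial subinterval, and in which $z$ --- being minimal --- can be drawn as a small non-trivial interval contained in every interval above it. Grafting this line into the host tree in place of the old $M$-region, attaching it where the outside-above paths enter and extending each of them along the whole grafted line, consolidates all up-paths through a single non-trivial core and thereby realises the one-sided configuration. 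The delicate point, where most of the work lies, is to position the grafted line so that every path incomparable to $M$ still meets it without containing or being contained in any interval of the new model; this is exactly what the facts that such paths avoid $a$ and that $M$ is a module --- so each of them relates uniformly to all of $M$ --- are meant to guarantee.
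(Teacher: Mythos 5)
Your setup (Lemmas \ref{lem:TrivialPathModule} and \ref{lem:PassingThroughA}, minimality of $z$, choice of a common neighbour $b$ of $a$) matches the paper's, and your first construction is essentially the paper's easy cases, but with one omission that actually breaks it: you never subdivide the edge $ab$. Setting the new path of $z$ to $\{a,b\}$ while keeping every other path unchanged fails whenever some element $m$ above $z$ is represented exactly by the two-vertex path $\{a,b\}$: then $W_z=W_m$, and the relation $z<m$ is no longer witnessed by a \emph{proper} containment. This is not an exotic configuration. Take $\P$ with $z<m<t$, $u<v<t$ and all other pairs incomparable; then $M=\{z,m\}$ is a strong CI clique module, and on the host path $a,b,d,e$ the representation $W_z=\{a\}$, $W_m=\{a,b\}$, $W_u=\{d\}$, $W_v=\{d,e\}$, $W_t=\{a,b,d,e\}$ is valid, with $b$ the only admissible choice; your move produces $W_z=W_m$. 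Your verification only rules out spurious containments \emph{into} $\{a,b\}$ (the twin $\{b\}$); it never checks properness of the containments that must survive. The subdivision vertex $c$ that the paper inserts between $a$ and $b$, followed by taking the new path of $z$ to be $\{a,c\}$, exists precisely to guarantee that every path through $a$ and $b$ strictly contains the new path; it is the essential ingredient, not a technicality.

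The fallback for the two-sided/twin cases is a more serious gap. You propose to rebuild $\P(M)$ as a CI model on a fresh line $L$ (via Property \ref{p:compresing-CI-model}) and to extend every path of an element above $M$ ``along the whole grafted line''. But a path of the host tree can be extended into a new branch only if that branch is attached at one of its endpoints; the elements above $M$ in general have paths with pairwise distinct endpoints that continue past the region occupied by $M$ on both sides, so wherever $L$ is attached, it sits at an \emph{interior} vertex of some of these paths, and the extended objects are no longer paths. If instead you mean splicing $L$ in place of the old region, you must then specify how far the paths that partially overlap that region (elements incomparable to $M$ entering it from either side) reach into the re-embedded model, and this is exactly the part you leave open. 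Note also that your stated ``delicate point'' --- ensuring incomparable paths still \emph{meet} the new model --- is not a requirement at all: disjointness also yields incomparability, so nothing needs to meet anything. The paper's two-sided case avoids all of this without touching any outside path: it subdivides the edges on both sides of $a$, extends only those paths of $M$ that \emph{end} at $a$ (they can be pushed through $a$ exactly because $a$ is their endpoint), and lets the new path of $z$ run between two newly created vertices. Your plan needs to be replaced by, or reduced to, a surgery of this local, subdivision-only kind.
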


 \begin{proof}
    By Lemma \ref{lem:ModuleContained} we know that there exists an element $x$
    such that all the paths of $M$ are contained in $\Path{x}$ in all CPT representations.
  Let us consider three cases.

  ~
  \\(1) Suppose the trivial path of $z$ is not an extremity
  of any path that represents the elements of $M$. Let $a$ be the vertex of $T$
  that hosts the trivial path of $z$. Since $\Path{z}$ is not an extremity of any path
  of $M$, $a$ admits at least one neighbor $b$ in $T$ such that all the paths
  of $M$ (except for $z$) pass  through $b$ (see Figure \ref{fig:CliqueModule}$(i)$).
  Let us subdivide the edge $a,b$ by adding a vertex $c$.
  Then it suffices to replace the trivial path of $z$ by a non-trivial path that goes
  from $c$ to $a$ in $T$.  The containment relations among $M$ are preserved
  and no new containment relation is added nor deleted with respect to the elements
  not in $M$.

  ~
  \\(2) Suppose now, the trivial path of $z$ is a common extremity for all the
  elements of $M$ and $M$ is one-sided (see Figure \ref{fig:CliqueModule}$(ii)$).
  We proceed as in the previous case; we consider a vertex $b$ of $T$ that is a neighbor
  of $a$ and such that all the paths of $M$ except for $z$ pass through $b$.
  Since $M$ is a clique, it only admits at most one element represented by
  a trivial path, such a vertex $b$ exists, then we subdivide the edge by adding
  a vertex $c$ and the path of $z$ goes from $a$ to $b$.
  Note that the technique still works if some paths of $M$ continue after $a$.

  ~
  \\(3) Suppose now, the trivial path of $z$ is the common extremity for
  some paths of the module in a 2-sided manner (see Figure \ref{fig:CliqueModule}$(iii)$).
  Let $b$ and $c$ be two vertices of $T$ that are neighbors of $a$, such that  $b$ and $c$ lie
  on the path of $x$, $x$ being an element not in $M$ that contains all elements of $M$.
  We can partition the elements of $M$  into three sets:
   $B$ the elements for which paths arrive at $a$ and pass through $b$,
    $C$  defined in a similar way but \emph{w.r.t.} $c$ instead of  $b$,
  and $A$, the paths of $M$ that go through $b$ and $c$. This time we need
  to subdivide the edges $a,b$ and $a,c$ of $T$. We add a vertex $i$ between $a$ and
  $b$ and a vertex $j$ between $a$ and $c$. Then it suffices to extend the paths of $B$ until
  $j$ and the paths of $C$ until $i$. The path of $z$ now goes from $i$ to $j$. By subdividing
  several times the edges $a,b$ and $a,c$, we can make sure that all
  the extremities are distinct.
 \end{proof}

\begin{figure}
 \begin{center}
  \includegraphics[width=\textwidth]{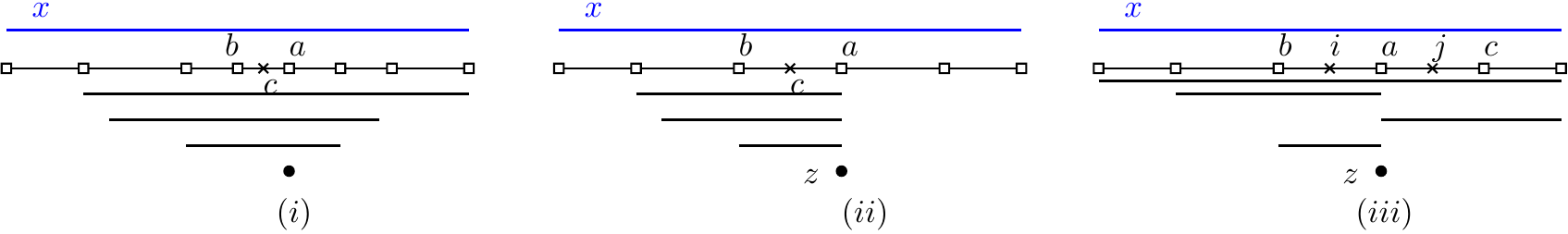}
 \end{center}
\caption{Representation of cliques modules with trivial paths.}
\label{fig:CliqueModule}
\end{figure}

\begin{lem}

\label{lem:StableModuleTrivialPath}
 Let $M$ be a strong CI stable module of a dually-CPT poset $\P$.
 If an element $z$ of $M$ is represented
 by a trivial path in a representation $\Model{\P}$, then there exists a representation
 $\Model[']{\P}$ where $z$ is represented as non-trivial path.
 \end{lem}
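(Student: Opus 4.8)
The plan is to follow the same local-surgery strategy used in the clique case (Lemma \ref{lem:CliqueModuleTrivialPath}), adapting it to the fact that the elements of a stable module are pairwise incomparable. First I would record the combinatorial structure forced by the hypotheses. Let $a$ be the vertex of the host tree $T$ carrying the trivial path $\Path{z}$. Since $M$ is stable, every other element of $M$ is incomparable to $z$, so no path $\Path{w}$ with $w\in M\setminus\{z\}$ can contain $\Path{z}$; equivalently, no such path passes through $a$. By Lemma \ref{lem:ModuleContained} there is an element $x\notin M$ whose path contains every path of $M$, and by Lemma \ref{lem:PassingThroughA} every element outside $M$ whose path meets $a$ must in fact contain all of $M$. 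Removing $a$ from $T$ splits it into the subtrees hanging at the neighbours of $a$; each $\Path{w}$ ($w\neq z$), avoiding $a$, lies entirely in one of them, and since all these paths are contained in the single path $\Path{x}$, which passes through $a$ and hence uses at most two of the edges at $a$, they occupy at most two such subtrees. This yields the same one-sided / two-sided dichotomy as in the clique case.

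Next I would carry out the surgery. Choose a neighbour $b$ of $a$ lying on an arm of $\Path{x}$ that contains the paths of $M\setminus\{z\}$ (in the one-sided case the unique such arm; in the two-sided case either arm). Subdivide the edge $ab$ by inserting a new vertex $c$, and redefine $\Path{z}'=\{a,c\}$, leaving all other paths unchanged (a path that used the edge $ab$ now passes through $c$). I would then check that $\Model[']{\P}$ represents the same poset. For $w\in M\setminus\{z\}$ the path $\Path{w}$ avoids $a$, so $\Path{z}'=\{a,c\}\not\subseteq \Path{w}$, and $\Path{w}\not\subseteq\{a,c\}$ because $\Path{w}$ lives beyond $b$; hence $z$ stays incomparable to the rest of $M$. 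For $y\notin M$, we have $z<y$ in $\P$ exactly when $a\in\Path{y}$; any such $y$ contains all of $M$ by Lemma \ref{lem:PassingThroughA}, in particular it contains the $M$-paths lying on the chosen arm, which forces $\Path{y}$ to pass through $b$ and therefore through $c$, so $\{a,c\}\subseteq\Path{y}$ and the relation $z<y$ is preserved; conversely, if $a\notin\Path{y}$ then $\Path{z}'\not\subseteq\Path{y}$ and $\Path{y}\not\subseteq\{a,c\}$, so incomparability is preserved. No relation among the remaining elements is affected, since only the edge $ab$ was subdivided.

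The delicate point, and the place I expect to spend the most care, is the preservation of the up-relations of $z$: everything hinges on showing that every outside path through $a$ also passes through the neighbour toward which we extend. This is exactly what the combination of Lemma \ref{lem:PassingThroughA} and the ``at most two arms'' observation provides --- in the two-sided case such a $y$ contains $M$-paths on both arms and hence passes through both neighbours of $a$, so extending toward either side is safe, while in the one-sided case all such $y$ must enter the single arm through $b$. Finally I would dispatch the routine bookkeeping: if $M$ carries several trivial paths, the construction is applied to them one at a time, each time with a fresh subdivision vertex, and, as in Lemma \ref{lem:CliqueModuleTrivialPath}, the edge at $a$ may be subdivided repeatedly so that all path endpoints remain distinct; the degenerate case $M=\{z\}$ is handled by the same extension toward any neighbour on $\Path{x}$.
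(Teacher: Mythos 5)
Your proof is correct and takes essentially the same route as the paper's: you invoke Lemma \ref{lem:ModuleContained} to get the containing element $x$ and Lemma \ref{lem:PassingThroughA} to control the outside paths through $a$, then perform the same local surgery of subdividing the edge from $a$ toward the rest of $M$ and extending $\Path{z}$ onto the new vertex. The only difference is bookkeeping: the paper relocates all trivially-represented elements of $M$ at once onto a family of pairwise-overlapping subpaths (a ``staircase'') of the subdivided edge, whereas you extend them one at a time, but both arguments rest on exactly the same two lemmas and the same subdivision idea.
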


 \begin{proof}
 Let us first remark that in a strong stable module, several elements can be
 represented as trivial paths in a representation $\Model[']{\P}$. In addition,
 if an element of $M$  is represented by a trivial path, the trivial
 path is disjoint from all the other paths representing the elements of $M$.
 Let $z$ be such an element.
 We will transform $\Model{\P}$ such that all the elements of $M$ represented by trivial paths
 in $\Model{\P}$ will be represented by non-trivial paths. Let $a$ be the vertex of $T$
 that hosts the path of $z$. Thanks to Lemma \ref{lem:ModuleContained}, we know
 that there exists an element $x$ of $\P$
 such that in $\Model{\P}$ the paths of the elements of $M$
 are contained in the path of $x$. Since $M$ is a non-trivial module it contains
 at least two elements, hence in $\Model{\P}$ there exists a vertex $b$ of $T$ that is adjacent
 to $a$, and $b$ is contained in all the paths of the elements not in $M$ that
 contain $M$, since such a path has to contain $\Path{z}$ and all the other elements of $M$.

 Let us denote by $U=\{u_1,u_2,\ldots,u_k\}$ the elements of $M$ that are represented
 by trivial paths in $\Model{\P}$.
 To obtain an equivalent representation $\Model[']{\P}$, we  subdivide $2k-1$ times the edge
 $a,b$.
 We then rename $a$ as $a_1$, and we
 number the newly created vertices $a_2,a_3,\ldots,a_{2k}$ (the transformation
 is presented in Figure \ref{fig:StableModule}). In this new
 representation each element $u_i$ of $U$ is replaced by a path
 that goes  from $a_{i}$ to $a_{k+i}$ in $T$.

 It remains to prove that this representation is equivalent. First observe that
 for any element $x$ connected to $M$, its path in $\Model{\P}$ contains all the
 elements of $M$. By the choice of vertex $b$ to perform the transformation, we
 can guarantee that any path of such an element $x$ will pass through $a,b$ in $\Model{\P}$.
 Since we subdivided this edge to obtain $\Model[']{\P}$ , this path will still pass through
 $a$ and $b$
and all the vertices introduced by the transformation.

Now for any element $y$ not connected to $M$, we know by
Lemma \ref{lem:PassingThroughA} that no path of such an element will pass through $a$.
\end{proof}

\begin{figure}
 \begin{center}
  \includegraphics{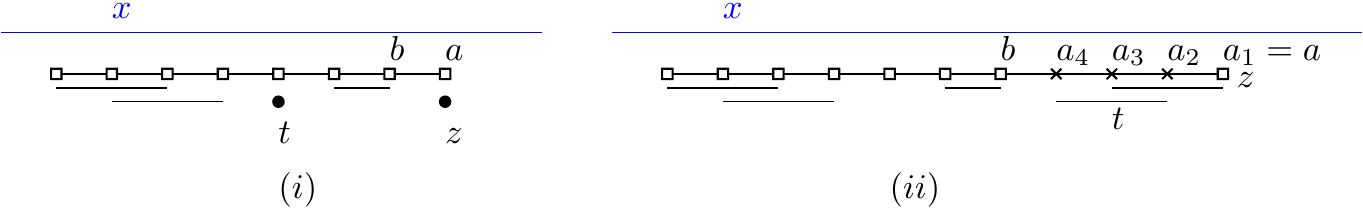}
 \end{center}
 \caption{$(i)$ Representation of a stable module with elements represented
 by trivial paths; $(ii)$ transformation to eliminate trivial paths
 from the representation.}
 \label{fig:StableModule}
\end{figure}

\begin{lem}
\label{lem:PrimeModuleTrivialPath}
Let $M$ be a strong CI prime module of a dually-CPT poset $\P$.
If an element $z$ of $M$ is represented
by a trivial path in a representation $\Model{\P}$, then there exists a representation
$\Model[']{\P}$ where $z$ is represented as non-trivial path.
\end{lem}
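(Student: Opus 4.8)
The plan is to exploit the fact that $M$ is $CI$ together with the freedom granted by Property~\ref{p:compresing-CI-model}, and to graft a better model of $M$ onto the host tree by subdividing the edges incident to the vertex carrying $z$. First I would fix the structure around $z$. Since $z$ is a trivial path, Lemma~\ref{lem:TrivialPathModule} tells us that no element outside $M$ lies below any element of $M$, so every element of $X-M$ comparable to $M$ is in fact above all of $M$; by Lemma~\ref{lem:ModuleContained} at least one such element $x$ exists and $\Path{x}$ contains every path of $M$. As $\Path{x}$ is a path, all paths of $M$ live on the line $\Path{x}$, so the induced representation of $M$ is a $CI$ model lying on a single segment, with $z$ sitting on a vertex $a$ of that segment. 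Lemma~\ref{lem:PassingThroughA} moreover guarantees that any path incomparable to $M$ avoids $a$, hence avoids every edge of $T$ incident to $a$.

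The main obstacle, and the reason the prime case is genuinely harder than the clique and stable cases, is that the elements of $M$ lying above $z$ may leave $a$ towards both neighbours of $a$ on the line while sharing only the vertex $a$ in common; in that situation $z$ is forced to be trivial, and no local extension of $\Path{z}$ to one side can keep it below all of them, since extending in one direction destroys a containment in the other. To overcome this I would discard the current model of $M$ and invoke Property~\ref{p:compresing-CI-model} to choose a $CI$ representation $\mathcal{I}$ of $M$ in which the intersection of all intervals is a non-trivial interval. Because every interval then contains this common non-trivial core, no interval of $\mathcal{I}$ is trivial, and in particular the interval assigned to $z$ is not.

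It remains to implant $\mathcal{I}$ into the host tree while preserving all comparabilities. I would subdivide the one or two edges of $T$ incident to $a$ along the line $\Path{x}$ that are actually used by $M$ as many times as $\mathcal{I}$ requires, exactly as in the two-sided configuration of Lemma~\ref{lem:CliqueModuleTrivialPath}, then realise $\mathcal{I}$ on the newly created interior vertices, keeping every new path strictly between the two neighbours of $a$ and deleting the old paths of $M$. Subdivision is a relation-preserving operation, so the only comparabilities that can change are those involving $M$: within $M$ they hold by construction; every element above $M$ used the subdivided edges (it contained all of the old $M$, which straddles $a$), so after subdivision it runs through all the fresh vertices and properly contains every path of $\mathcal{I}$; and every element incomparable to $M$ avoids $a$, hence avoids all edges incident to $a$, so—since subdivision only lengthens paths that used a subdivided edge—it never meets the fresh vertices and its path is disjoint from, thus non-nested with, each path of $\mathcal{I}$. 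The resulting model $\Model[']{\P}$ therefore represents $\P$ and depicts $z$ by a non-trivial path. The delicate point to pin down in the full write-up is precisely this last verification—confining $\mathcal{I}$ to the fresh vertices so that no incomparable path is accidentally captured inside an enlarged interval of $M$—together with the bookkeeping of the one-sided versus two-sided placement around $a$.
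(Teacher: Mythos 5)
Your proposal is correct, but it takes a genuinely different route from the paper's proof. The paper keeps the existing model of $M$ essentially intact and performs a three-case local surgery according to how $\Path{z}$ sits among the other paths of $M$: (1) $\Path{z}$ interior to all of them, (2) $\Path{z}$ a two-sided endpoint (right bound of some paths, left bound of others), (3) $\Path{z}$ a one-sided endpoint; in each case it subdivides one or two edges at $a$ and re-routes only $\Path{z}$ (and, in the two-sided case, slides the endpoints of the paths in the sets $L$ and $R$ onto the fresh vertices). You instead discard the entire induced model of $M$ and re-embed a compressed $CI$ model of $M$ (via Property~\ref{p:compresing-CI-model}) on freshly subdivided edges at $a$, verifying external relations through Lemmas~\ref{lem:TrivialPathModule}, \ref{lem:ModuleContained} and \ref{lem:PassingThroughA}. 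Your verification is sound: each subdivided edge is used by some old path of $M$, so every element above $M$ contains both endpoints of that edge and hence (being a path in a tree) the edge itself, while every element incomparable to $M$ avoids $a$ and therefore all edges incident to $a$; confining the new intervals to the fresh vertices and $a$ then settles both directions. What your approach buys is uniformity: it collapses the paper's case analysis into one construction, eliminates all trivial paths of $M$ at once, and barely uses primality (only that $z$ is below some element of $M$, so some edge at $a$ is used). In effect you are invoking the Replacement process of Section~\ref{sec:Substition} (Lemma~\ref{l:substitution-1}) one section early, which is exactly where the paper is ultimately headed anyway; the paper's local surgery is the more minimally invasive option but is not needed for the final theorem. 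One small imprecision to fix in a full write-up: your parenthetical justification that an element above $M$ ``contained all of the old $M$, which straddles $a$'' should be replaced by the edge-by-edge argument above, since in the one-sided configuration $M$ need not straddle $a$ at all.
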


\begin{proof}
 For this proof, we consider three cases: (1) either $\Path{z}$ the trivial path of $z$ is
 properly contained (\emph{i.e.} $\Path{z}$ is not an extremity of
 any path of the element of $M$) in all the paths of the elements of the module $M$, or (2)
 there exists at least two elements $q$ and $r$ of $M$ such that $\Path{z}$ is the right bound
 of $\Path{q}$ and the left bound of $\Path{r}$, or (3) the path $\Path{z}$ is the right
 (respectively left) bound for some paths  representing elements of $M$,
 and is not  the left (respectively right) bound of any elements of $M$.
 These three cases are illustrated in Figure
 \ref{fig:PrimeModuleTrivialPath}$(i)-(iii)$.

~
\\ \noindent
 { (1)}
 Let $a$ be the vertex of $T$ that hosts $\Path{z}$, the trivial path representing $z$.
 By hypothesis, all the paths that represent the elements of $M$ properly contain
$\Path{z}$ and thus pass through vertex $a$. Since it is a proper containment, no
path of elements of $M$ (other than $z$) starts or finishes at $a$. Thus
$a$ admits at least one neighbor $b$ in $T$ such that all the paths
that represent elements of $M$, except for $z$, pass through $b$. To obtain
a  new representation $\Model[']{\P}$ we subdivide the edge $a,b$ by a adding
a vertex $d$. Then $\Path{z}$ in $\Model[']{\P}$ is replaced by the path $a,d$.
(See Figure \ref{fig:PrimeModuleTrivialPath}$(iv)$).
Since the representation of $\Path{z}$ is the only modification of the
representation, by the previous discussion all the paths that represent the
elements of $M$ pass through $a$ and $b$ and as a consequence pass
through $a$ and $d$ since $d$ is in between $a$ and $b$.
By Lemma \ref{lem:PassingThroughA} we know that all the paths of the elements not in $M$ that
pass through  $a$ will also contain all the paths of $M$. Hence the modification of $\Path{z}$
preserves the containment relation of $\Model{\P}$.

~
\\ \noindent
{(2)}
Let us now consider that there exist at least two elements $q$ and $r$ of $M$
such that in $\Model{\P}$, the vertex $a$ is the right bound of
the path $\Path{q}$ and the left bound of the path $\Path{r}$ (see Figure
\ref{fig:PrimeModuleTrivialPath}$(ii)$).
Let us denote by $L$ the set of elements of $M$ for which $a$ is the right bound
in the representation $\Model{\P}$ and similarly let us denote by $R$ the set of
elements of $M$ for which $a$ is the left bound in $\Model{\P}$. Let us remark
that $L\cap R = \emptyset$ and some elements of $M \setminus (L\cup R)$ might
not be empty.  Let $b$ be the neighbor of $a$ in $T$ such that the paths of
the elements of $L$ pass through $b$. And similarly let $c$ be the neighbor
of $a$ in $T$ such that the paths of the elements of $R$ pass through $c$.
To obtain a new representation $R'(P)$ we subdivide  the edge $a,b$ $|R|+1$ times,
and the edge $a,c$ $|L|+1$ times. The added vertices are called $ab_{i}$ for the vertices
between $a$ and $b$ and $ac_{j}$ for the vertices between $a$ and $c$.
Let $ab_{1}$ and $ac_{1}$ be the neighbors of $a$
in $\Model[']{\P}$. The path  $\Path{z}$ now goes from $ab_{1}$ to $ac_{1}$. 
The left bound
of the paths of the elements of $R$ are moved on the $ab_{i}$ vertices. The coordinates
are chosen to preserve the containment relation. We proceed symmetrically
for the paths of the elements in $L$. It remains to prove that the obtained representation
still corresponds to $\P$. Again we know by Lemma \ref{lem:PassingThroughA} that no
path of an element not connected to $M$ passes through $a$, by construction
it remains valid for $a$ and for all the newly introduced vertices. For any other path
their relation to $\Path{z}$ and the paths of the elements of $L$ and $R$ are unchanged.
If the path $\Path{s}$ of an element $s$ was containing the path $\Path{l}$ of an element
$l$ of $L$ in $\Model{\P}$, it is still the case in $\Model[']{\P}$. In that case the left
bound
of $\Path{l}$ is contained in  $\Path{s}$ and the right bound of $\Path{s}$ will be at the
right
of $c$ in $\Model{\P}$. This property will be preserved in $\Model[']{\P}$. Similarly
if both paths $\Path{s}$ and $\Path{l}$ were overlapping in $\Model{\P}$, they are still
overlapping in $\Model[']{\P}$.

~
\\ \noindent
{(3)}
Since $\Path{z}$ is the right  (\emph{resp.} left) bound of some paths representing some
elements of $M$,
and is not the left (\emph{resp.} right) bound of any other elements of $M$,
there exists a vertex
$b$ in $T$ adjacent to $a$ and such that all the paths representing elements of $M$ that end
at $a$
pass through $b$. To obtain the new representation $\Model[']{\P}$ it suffices
to subdivide this edge one time.
Let $c$ the newly introduced vertex. Then the trivial path $\Path{z}$  in $\Model{\P}$ is
replaced by
a path going from $a$ to $c$. By the transformation, we can observe that all the paths
that were containing $\Path{z}$ in $\Model{\P}$ still contain $\Path{z}$
in $\Model[']{\P}$. Let $s$  be an element of $\P$ such that $\Path{z} \subset \Path{s}$ in
$\Model{\P}$.
If $\Path{s}$ was containing $\Path{z}$ it had to pass through $a$ and $b$, thus by subdividing
$a,b$ we can also conclude that this paths will pass through $a$ and $c$, the added
vertex, in $\Model[']{\P}$.
\end{proof}

\begin{figure}
 \begin{center}
  \includegraphics{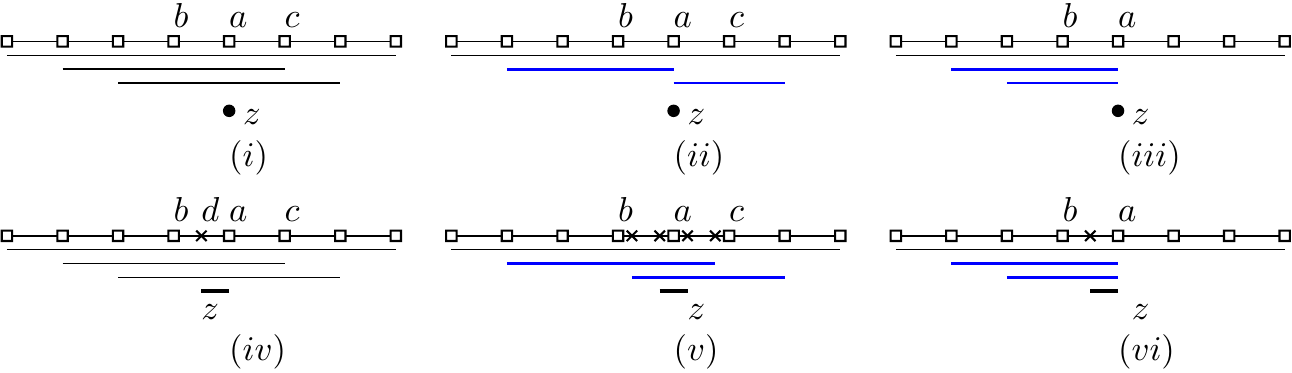}
 \end{center}
 \caption{Representation of prime modules with the element $z$ represented as
 trivial path.}
 \label{fig:PrimeModuleTrivialPath}

\end{figure}

\begin{thm}
 If $\P$ is  dually-CPT  and in a representation $\Model{\P}$ some elements of
 strong modules are represented by trivial paths, then there exists an equivalent
 representation $\Model[']{\P}$ where all the paths representing elements of strong modules
 are non-trivial paths.
\end{thm}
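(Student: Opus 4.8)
The plan is to eliminate the offending trivial paths one strong module at a time, feeding the three structural lemmas \ref{lem:CliqueModuleTrivialPath}, \ref{lem:StableModuleTrivialPath} and \ref{lem:PrimeModuleTrivialPath} with the appropriate module type and iterating until none survive. First I would reduce to the case where $\P$ is connected: a disconnected $\P$ has its connected components as maximal strong modules, each is again dually-$CPT$ (the class is hereditary, since restricting a model to a subset of the elements yields a model of the induced subposet and dualizing commutes with restriction), and the component host trees can be linked into a single tree without creating cross-component containments. Throughout the process $\P$ itself is fixed and only $\Model{\P}$ is edited, so by Proposition \ref{p:associated-posets} the comparability graph and the whole family of strong modules stay the same at every step.

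Next I would verify that every proper non-trivial strong module $M$ carrying a trivial path is a $CI$ poset, which is exactly the hypothesis the three lemmas require. If an element of $M$ is a trivial path, Lemma \ref{lem:ModuleContained} supplies an $x\notin M$ whose path contains every path of $M$, so $M\subseteq D(x)$; then the closed down-set $D[x]$ induces a $CI$ poset by Lemma \ref{l:neces}, and hence so does its subposet $M$. (This is the module-level refinement of Theorem \ref{t:dually-nprime}, applied along the modular decomposition tree.) By the Modular decomposition theorem, $G_{\P}[M]$ is complete, edgeless, or prime, so exactly one of the three lemmas applies and returns a representation in which the trivial element(s) of $M$ are replaced by non-trivial paths.

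To organise the iteration I would use as potential the number of elements of proper strong modules represented by trivial paths. Each of the three transformations is local around the host-tree vertex $a$ that carries the trivial path: it subdivides edges incident to $a$ and only lengthens paths, namely the paths of $M$ itself and, automatically, every outside path through $a$, which by Lemma \ref{lem:PassingThroughA} already contains all of $M$ and therefore survives the subdivision. Since no path is ever shortened or removed, no new trivial path is created and all containments of $\Model{\P}$ are preserved, so the potential strictly decreases and the process halts with no element of a proper strong module drawn as a trivial path. Elements lying only in the trivial strong modules $\{x\}$ and $X$ fall outside the claim, which is consistent because $\P$ itself need not be $CI$; the three lemmas precisely cover the proper strong modules, all of which are $CI$.

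The point demanding the most care, and the one I expect to be the real obstacle, is the non-interference of the successive surgeries: cleaning one module must neither resurrect a trivial path in an already-cleaned module nor corrupt the drawing of any vertex outside it. This is secured by the subdivide-and-lengthen-only nature of the lemmas together with Lemma \ref{lem:PassingThroughA}, and the bookkeeping is cleanest if the strong modules are processed from the innermost outward: laminarity of the strong modules then guarantees that the edits for $M$ happen inside the host-tree region already devoted to $M$ and leave untouched the edges used by incomparable or larger modules.
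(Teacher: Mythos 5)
Your overall strategy is exactly the paper's: the paper's entire proof is that the statement follows from Lemmas \ref{lem:CliqueModuleTrivialPath}, \ref{lem:StableModuleTrivialPath} and \ref{lem:PrimeModuleTrivialPath} together with the observation that each surgery creates no new trivial path, and your additions (the CI-ness of a module carrying a trivial path, deduced from Lemmas \ref{lem:ModuleContained} and \ref{l:neces}, and the decreasing potential argument for termination) are correct details that the paper leaves implicit. However, your case analysis has a genuine hole. You assert that by the modular decomposition theorem $G_{\P}[M]$ is complete, edgeless, or prime, so that exactly one of the three lemmas applies. The decomposition theorem says this about the \emph{quotient} of $M$, not about $M$ itself. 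Concretely, let $M$ be the poset on $\{a_1,a_2,b_1,b_2\}$ with $a_i<b_j$ for all $i,j$ and $a_1\parallel a_2$, $b_1\parallel b_2$ (comparability graph a $4$-cycle). It is CI, and substituting it for a minimal vertex of a prime CI poset (say the poset $a<c$, $b<c$, $b<d$) yields, by Theorem \ref{t:strong-property}, a CI --- hence dually-CPT --- poset $\P$ having $M$ as a strong module; in a line representation of $\P$ one can easily draw $\Path{a_1}$ as a single vertex. This $M$ is neither complete, nor edgeless, nor prime, so under your reading none of the three lemmas applies and your iteration gets stuck on it. The paper's convention (stated after the decomposition theorem in Section \ref{sec:Def}: \emph{stable}, \emph{clique}/series, \emph{prime}/neighborhood name the three cases of the decomposition, and these adjectives are transferred to modules via the induced subposet) makes ``clique module'' mean that $\overline{G_{\P}(M)}$ is disconnected; with that reading the trichotomy stable/clique/prime \emph{is} exhaustive, and your proof must invoke the lemmas with those hypotheses rather than the literal ones you state.

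A second, smaller gap is the reduction to connected $\P$. If $\P$ is disconnected, a component $C$ with $|C|\geq 2$ is itself a proper non-trivial strong module of $\P$, so the theorem's conclusion demands that \emph{every} element of $C$ end up on a non-trivial path. Applying your procedure (or the theorem inductively) inside $C$ only cleans the elements lying in proper strong modules of $C$; an element of $C$ belonging to no proper strong module of $C$ may remain trivial, yet it lies in the strong module $C$ of $\P$. So the disconnected case is not actually reduced to the connected one. (The paper implicitly assumes connectivity --- its Lemma \ref{lem:ModuleContained}, on which everything rests, invokes it outright --- so this defect is inherited rather than introduced by you, but the reduction as you wrote it does not close it.)
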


\begin{proof}
 It is a direct consequence of Lemmas  \ref{lem:CliqueModuleTrivialPath},
 \ref{lem:StableModuleTrivialPath} and
 \ref{lem:PrimeModuleTrivialPath} and the fact that each time a trivial path
 is replaced by a  non-trivial one, no trivial path is created in $\Model[']{\P}$.
\end{proof}

From the preceding  theorem, we know
how to obtain a representation a dually-CPT poset where
all the elements contained in non-trivial strong modules
are represented by non-trivial paths. Hence, in
this representation some elements that do not belong
to strong modules might be represented by trivial paths.

%%%%%%%%%%%%%%%%%%%%%%%%%%%%%%  ENDING OF MODULES ON TRIVIAL PATHS
%%%%%%%%%%%%%%%%%%%%%%%%%%%%%%%%%%
\section{Ending of modules onto trivial paths}
\label{sec:ModuleVsTrivPath}

In the previous section we proved that for a dually-CPT poset,
one can always obtain a representation
where no element of a strong module is represented by a trivial path.
It therefore remains to consider
how the paths that represent a strong module $M$ can connect to
an element $z$, not contained in a strong module, where $z$ is represented
by a trivial path in the representation $\Model{\P}$.
Since we need to reconfigure the containment
relation inside the module, this operation could be prevented or constrained if the
trivial path is misplaced. In the case where the trivial path is in the middle of the paths of
the module, it will be easy to reconfigure the containment relation. In the opposite
case, if all the paths representing elements of a module arrive at a trivial path,
we cannot perform the intended operation as planned. In this section, we will identify
the problematic situations, and we will show how to overcome these problems. As in the previous
section, we will perform local changes to the representation to suppress problematic cases.

% Marty deleted this paragraph since it just repeats the previous paragraph.
% We consider in that section how a strong module $M$ can be connected to
% an element $z$ that is represented by a trivial path in on representation $\Model{\P}$.
% The trivial path of $z$ will be located on the vertex $a$ of $T$.
% Note that element $z$ is not part of $M$.

% ~\\
%
% \textcolor{red}{CAN YOU GIVE THIS FIRST PART OF SECTION 4
% a "subsection TITLE"? That way it would be 4.1 and should match/complement
% the SUBSECTION on Page 14 which then would become 4.2 .}
%
% ~\\

When the paths that represent elements of a module are connected to
a trivial path in a representation, several configurations could arise.
The most favorable one, is when the trivial path is properly contained
in the paths of the module (\emph{i.e.} the trivial path
does not lie on any extremity of the path of the module).
Actually this is a configuration we aim at obtaining. The other two
configurations is when all the path have their extremities that
end at a trivial path, or just some of them end at this trivial path.
In most cases we will be able to reconfigure our representation to
obtain a representation that is favorable to our purpose.

\subsection{Complete ending of a module  on a trivial path}

Let us assume that all paths in $\Model{\P}$ corresponding to elements of $M$
have all their extremities end at a vertex $a$ of the host tree $T$.
In that case, there are several possibilities:
   either all the paths that represent $M$ will arrive at $a$ by passing by a vertex
 $b$ of $T$ and such that $a,b$ is an edge of $T$,
 or
%  \textcolor{red}{
%       <<<<<<FILL IN THE OTHER POSSIBILITIES>>>>>>.}
 there is another vertex $c$ that is a neighbor of $a$ in $T$ different from $b$ and
 such that some paths of the module pass through $c$.

In this section, even if it is not explicitly stated, the representation
 of the module $M$ will contain the trivial path of $z$ located at the vertex $a$ in $T$.

 \begin{rem}
  If a strong module  of a dually-CPT poset $\P$ is two-sided
  in a representation $\Model{\P}$, then the induced graph
  is not connected. Hence the strong module is a stable module.
 \end{rem}

 \begin{lem}
 Let $M$ be a strong module of a dually-CPT poset $\P$. If $M$ is one-sided
 in a representation \Model{\P} and the poset induced by $M$ is connected,
 then $M$ is a clique module.
 \end{lem}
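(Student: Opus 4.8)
The plan is to exploit one-sidedness to show that $\P(M)$ carries the structure of a rooted tree order, and then to use connectivity to force a global minimum, which immediately yields the series (clique) type through the modular decomposition theorem.

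First I would root the host tree $T$ at the common vertex $a$ at which every path of $M$ arrives. Since $a$ is an endpoint of each such path, every $\Path{x}$ with $x\in M$ is precisely the unique path in $T$ joining $a$ to $e_x$, where $e_x$ denotes the other endpoint of $\Path{x}$ (and $e_z=a$ for the possible trivial path $z$). Containment then translates into the tree order: $\Path{x}\subseteq\Path{y}$ if and only if $e_x$ is an ancestor of $e_y$. Consequently, for any $y\in M$ the down-set $D(y)$ consists of the elements whose endpoints lie on the chain from $a$ to $e_y$, so $D(y)$ is totally ordered. In other words, $\P(M)$ is a forest poset: its comparabilities are exactly the ancestor-descendant pairs of a rooted forest.

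Next I would relate the connected components of $G_{\P(M)}$ to the trees of this rooted forest. Every comparability of $\P(M)$ is an ancestor-descendant pair and therefore joins two elements lying in the same tree; hence any path in $G_{\P(M)}$ stays within a single tree, and two elements belonging to different trees cannot be joined by such a path. Since $\P(M)$ is assumed connected, the forest must reduce to a single rooted tree, whose root $m$ is then the unique minimal element of $\P(M)$, so that $m<x$ for every other $x\in M$.

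Finally, because $m$ is comparable to every other element of $M$, it is an isolated vertex of the complement $\overline{G_{\P(M)}}$; as $M$ is a non-trivial module ($|M|\ge 2$), this makes $\overline{G_{\P(M)}}$ disconnected. By the modular decomposition theorem this places $\P(M)$ in the series case, that is, $M$ is a clique module, as claimed. I expect the step requiring the most care to be the first one, namely verifying that one-sidedness (all paths sharing the endpoint $a$) genuinely produces a rooted-tree order once the possible trivial path at $a$ is accounted for; the passage from connectivity to a unique minimum, and from there to a disconnected complement, is then short.
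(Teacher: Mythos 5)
Your argument takes a genuinely different route from the paper's. The paper invokes the modular decomposition trichotomy (a connected $\P(M)$ is of series or of prime type) and then kills the prime case by showing that an induced $P_4$ --- which every prime poset on at least four vertices contains --- cannot be represented by paths that all arrive at one vertex $a$ of the host tree. You instead analyze the geometry directly: rooting $T$ at $a$ turns containment into the ancestor order on the far endpoints $e_x$, and connectivity is then supposed to produce a global minimum, whence $\overline{G_{\P(M)}}$ is disconnected and the series ("clique") case of the decomposition theorem applies. Your route is more self-contained (no appeal to the fact that prime posets contain a $P_4$) and, when it works, gives slightly more: $\P(M)$ is a linear sum with its minimal elements at the bottom.

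However, as written there is a gap. The paper's definition of a containment representation only requires $x<y$ if and only if $\Path{x}\subsetneq\Path{y}$, so two \emph{incomparable} elements may be assigned the \emph{same} path, and one-sidedness does not exclude this. In that case the map $x\mapsto e_x$ is not injective, $\P(M)$ is not a forest order, and both of your key intermediate claims fail: $D(y)$ need not be totally ordered and there need not be a unique minimal element. Concretely, let $\P(M)$ be $x_1<y$, $x_2<y$, $x_1\parallel x_2$, with $\Path{x_1}=\Path{x_2}$ the one-edge path on $\{a,t_1\}$ and $\Path{y}$ the path on $\{a,t_1,t_2\}$; this is one-sided and connected, it occurs with $M$ strong inside a genuine dually-CPT poset (add $w$ above all of $M$ and $u$ with $u<w$, $u$ incomparable to $M$), and it has two minimal elements. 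The conclusion of the lemma still holds there, and your argument can be repaired while keeping its structure: let $d$ be the deepest common ancestor of $\{e_x : x\in M\}$ in $T$ rooted at $a$. If no element of $M$ had endpoint exactly $d$, the endpoints would be distributed among at least two child subtrees of $d$, and elements whose endpoints lie in distinct child subtrees are incomparable, so $G_{\P(M)}$ would be disconnected --- a contradiction. Hence the set $D$ of elements with endpoint $d$ is nonempty; $D$ is an antichain, every element of $M\setminus D$ is above every element of $D$, and $M\setminus D\neq\emptyset$ (otherwise $G_{\P(M)}$ would be edgeless with $|M|\geq 2$). So $\overline{G_{\P(M)}}$ has no edge between the nonempty sets $D$ and $M\setminus D$ and is disconnected, which is exactly the series case you invoke.
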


 \begin{proof}
  If the graph induced by $M$ is connected, $M$ is either a clique module
  or a prime module.
  If $M$ is a clique module, then there is nothing to prove.
  If $M$ is a prime module, then the graph induced by $M$
  necessarily contains an induced $P_4$. Let us show that it is not
  possible to represent a $P_4$ as a CPT representation where all the paths
  end up at a same vertex $a$ of the host tree $T$.
  Consider the representation of the $P_4$ as presented in Figure \ref{fig:P4}$(i)$
  with the containment relation represented in Figure \ref{fig:P4}$(ii)$.

  For a contradiction, let us assume that such a representation exists.
    Since the paths of $2$ and $4$ have to contain the path of $3$ and all these
  paths have to arrive at vertex $a$ of the host tree, we have a configuration
  similar to the one depicted in Figure \ref{fig:P4}$(iii)$ and a part of
  the host tree is depicted in Figure \ref{fig:P4}($iv$).
  Since $2$ and $4$ are not connected, their paths have to diverge in $T$.
  Call  $x$ the vertex of $T$ where these paths diverge.
It remains to represent the path of $1$. Since $1$ is connected to $2$ but not to
$4$,  call $y$ the vertex of $T$  where the path of $1$ begins.
The vertex $y$ has to lie in the proper part of the path of $2$
(see Figure \ref{fig:P4}($iv$)), and this path, by hypothesis, has to go all the way to
$a$. But in that case it has to contain the path of $3$, hence there is a contradiction.
\end{proof}

 \begin{figure}
  \begin{center}
   \includegraphics{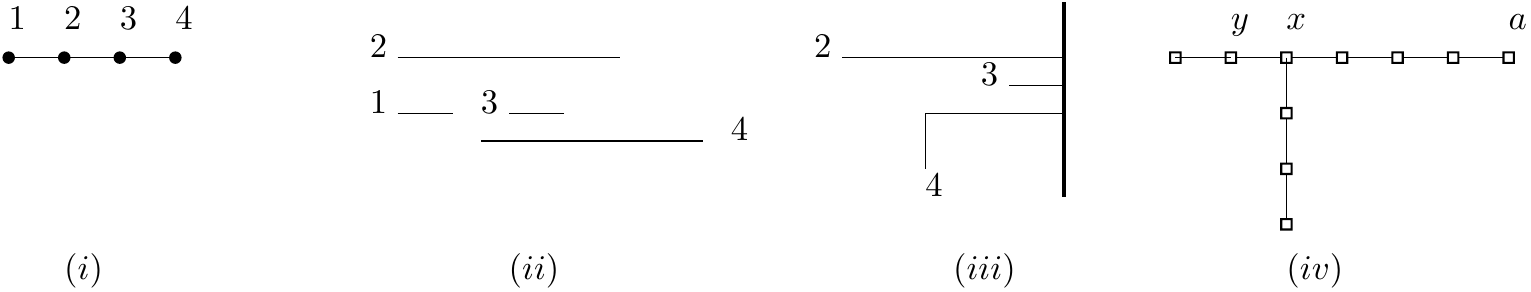}
  \end{center}

  \caption{$(i)$ a $P_4$, $(ii)$ a CI representation of $P_4$, $(iii)$ tentative representation
  with
  all the paths arriving at a vertex, $(iv)$ the host tree of the tentative
  representation.\label{fig:P4}}
 \end{figure}

 We have proven that if in the representation of a strong module all its paths
 arrive at a same vertex of the host tree, then the module is either a clique or a stable module.
 We now consider in which cases can we obtain an alternative representation
 where all the paths do not arrive at a same vertex of the host tree.
When the modification is possible, we will show how by starting from $\Model{\P}$ one
can obtain an equivalent representation $\Model[']{\P}$, that is,
a containment representation that still corresponds to $\P$.

\begin{lem}
\label{lem:ModNotIncluded}
 Let $M$ be a strong module of a dually-CPT poset $\P$ and $\Model{\P}$
  a representation of
 $\P$ where all the paths of $M$ arrive at a same vertex. If there is no
 element of $\P$ that contains the elements of $M$, then there exists
 an alternative representation $\Model[']{\P}$ of $\P$ where all the
 paths of $M$ will have different endpoints.
\end{lem}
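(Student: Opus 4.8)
The plan is to exploit the hypothesis that nothing lies above $M$ in order to freely lengthen, at the common vertex $a$, the paths representing $M$, and thereby spread out their endpoints without creating or destroying any containment. First I would fix notation: let $a$ be the vertex at which all the paths of $M$ arrive, and let $z$ be the trivial path located at $a$ (the standing convention of this section), so that $\Path{z}=\{a\}\subset \Path{x}$ and hence $z<x$ for every $x\in M$. The decisive consequence of the hypothesis is the following. Since $M$ is a module, any element $y\notin M$ comparable to $M$ is comparable to \emph{every} element of $M$ in the same direction; because no element of $\P$ contains the elements of $M$, this forces $\Path{y}\subsetneq \Path{x}$ for all $x\in M$. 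Thus no path outside $M$ properly contains any $\Path{x}$ with $x\in M$. By the lemma and remark preceding this statement, $M$ is either a clique module or a stable module, and $\P(M)$ is $CI$; I will treat the two cases separately, always performing the modification by extending the paths of $M$ into \emph{fresh} tree vertices attached at $a$, away from the body of the module, so that no existing path of $\Model{\P}$ ever meets a new vertex.

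If $M$ is a clique, its elements form a total order $\Path{x_1}\subset\Path{x_2}\subset\cdots\subset\Path{x_m}$, all sharing the endpoint $a$. I would append to $a$ a fresh path $a,c_1,c_2,\dots,c_m$ of new vertices and replace each $\Path{x_i}$ by the path obtained from it by adjoining the segment $a,c_1,\dots,c_i$. The inclusion order is preserved, since on the new side the segments $\{c_1,\dots,c_i\}$ are themselves nested, and the resulting endpoints $c_1,\dots,c_m$ are pairwise distinct. If instead $M$ is stable, its paths are pairwise non-nested but all end at $a$; here I would attach to $a$ a separate fresh leaf $a_i$ for each $x_i\in M$ and extend $\Path{x_i}$ by the single vertex $a_i$. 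Two such extended paths remain incomparable, as $a_i$ belongs only to the extension of $x_i$, and again the endpoints $a_1,\dots,a_m$ are distinct. In either case $a$ becomes an interior vertex of every extended path, so $z$ is now properly contained in each of them.

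It then remains to check that the new family $\Model[']{\P}$ still represents $\P$, and this is exactly where the hypothesis pays off. Comparabilities inside $M$ are preserved by construction (nested in the clique case, branched in the stable case). For an element $y\notin M$, its path uses no fresh vertex, so it cannot come to contain an extended path of $M$; and the only containment we might have destroyed, namely some $\Path{y}\supseteq\Path{x}$ with $x\in M$, never existed, since by the hypothesis no outside element is above $M$. Conversely, every element below $M$ was contained in every $\Path{x}$ and stays contained after we have only enlarged the $\Path{x}$, while elements incomparable to $M$ are untouched. I expect the main obstacle to be precisely this verification: to argue cleanly that extending the module's paths introduces no spurious containment and removes none, isolating the role of the ``nothing above $M$'' hypothesis and making sure both module types are handled so that the internal order of $M$ is respected. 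The tree-surgery itself (subdividing at $a$ and adding leaves) is routine once this invariant is pinned down.
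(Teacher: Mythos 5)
Your overall strategy---extend the paths of $M$ into fresh vertices attached at the common vertex $a$, using the hypothesis that nothing lies above $M$ to argue that no containment with the outside is created or destroyed---is exactly the idea of the paper's proof, and your clique case and your verification of the outside relations are sound. The gap is in your ``stable'' case. In this paper \emph{stable} (equivalently, \emph{parallel}) means that the comparability graph induced by the module is \emph{disconnected}, not that the module is an antichain; the remark you invoke concludes disconnectedness and nothing more. A parallel strong module may contain comparable pairs inside its connected components, and then your construction breaks: giving every element $x_i\in M$ its own private leaf $a_i$ destroys every containment $\Path{x_i}\subset\Path{x_j}$ holding inside $M$, because after the surgery $a_i\in\Path{x_i}$ but $a_i\notin\Path{x_j}$, so the new family no longer represents $\P$.

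Concretely, let $M=\{x_1<x_2\}\cup\{y_1<y_2\}$ be the disjoint union of two $2$-chains and let $z$ be a single further element lying below all of $M$. This poset is $CI$ (hence dually-CPT), $M$ is a strong module of parallel type with comparable pairs, no element contains the elements of $M$, and there is a representation in which all four paths of $M$ arrive at the vertex $a$ carrying the trivial path of $z$ (the pair $x_1,x_2$ nested along one branch, the pair $y_1,y_2$ nested along another). Your stable-case surgery applied to this representation leaves $\Path{x_1}$ and $\Path{x_2}$ incomparable, which is wrong. The repair is what the paper does, and it needs no case distinction: attach a \emph{single} new branch of $|M|$ fresh vertices $c_1,c_2,\dots$ at $a$, and extend each path of $M$ into that branch to a depth chosen according to a linear extension of $\P(M)$. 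Since no path of $\P$ outside $M$ meets the new branch, one checks that the extended path of $x_i$ is properly contained in that of $x_j$ if and only if $\Path{x_i}\subseteq\Path{x_j}$ and the depth of $x_i$ is smaller; hence the order inside $M$ is preserved for every module type (clique, antichain, or parallel with comparabilities) in one stroke, all endpoints are distinct, and your argument for the elements outside $M$ goes through verbatim.
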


\begin{proof}
Let us assume that all the paths of a strong module $M$ arrive at a vertex
$a$ in the representation $\Model{\P}$. If there is no element of $\P\setminus M$ that
contains all the paths of $M$, then we can add a new branch to the host tree
starting at $a$ and ending at $b$ (see Figure \ref{fig:ModNotIncluded}).
Let us denote by $k$ the cardinality of $M$. In order to guarantee that all paths
end on a dedicated vertex, the new branch needs to have at least $k$ new
vertices. It is easy to make sure that the containment relation inside the module
is not altered in this new representation. It is simple to notice that
the previous containments of $\P$ are preserved by this modification and
no new containment is added since the branch only contains paths of $M$.
\end{proof}

\begin{figure}

 \begin{center}
  \includegraphics{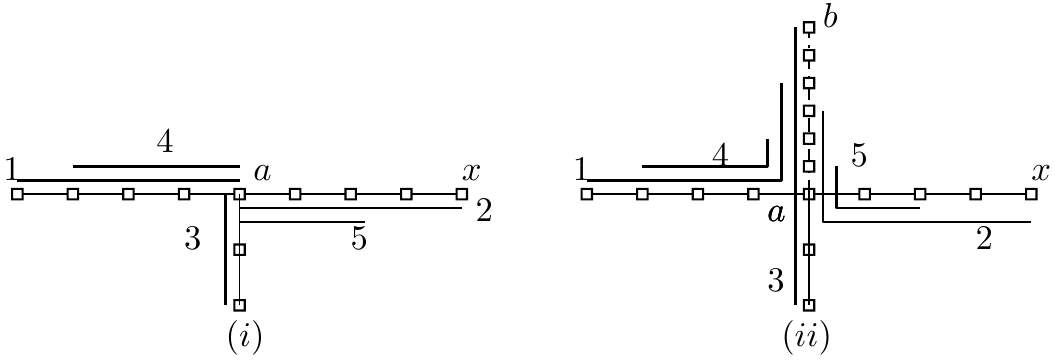}
 \end{center}

 \caption{Example of modification on a representation of a poset $\P$}
 \label{fig:ModNotIncluded}
\end{figure}

  We now consider the case when there is at least one element $x$
 not in $M$ that is greater than all
 %contains all
 the elements of $M$.
In that case $M$ is either one-sided or two-sided.
Let us start with this second case.

\begin{lem}
\label{lem:Stable-Path}
 Let $M$ be a strong stable module of a dually-CPT poset $\P$
 and let $x$ be an element of
 $\P\setminus M$ that contains all the elements of $M$.
 Let us assume that in a representation $\Model{\P}$ all the
 paths of $M$ arrive at a vertex $a$ of $T$. Then there
 exists an equivalent representation $\Model[']{\P}$ where all
 the endpoints of the paths of $M$ near $a$ are distinct.
\end{lem}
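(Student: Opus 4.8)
The plan is to first use the hypotheses to pin down the very rigid local picture around $a$, and then to remove the coincidence of endpoints by a purely local subdivision-and-extension of the host tree.

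First I would read off the forced structure. Every path of $M$ has $a$ as an extremity, and by hypothesis every one of them is contained in the single path $\Path{x}$; hence each is a sub-path of the line $\Path{x}$ having $a$ as an endpoint. Two such sub-paths lying on the same side of $a$ are nested and therefore comparable, which is impossible since $M$ is stable. Consequently $M$ cannot be one-sided (that would force all its paths to leave $a$ through a common neighbour, hence to nest), so it is two-sided: exactly one element $m_1$ of $M$ runs from $a$ into the branch through one neighbour $b$ of $a$ on $\Path{x}$, and exactly one element $m_2$ runs from $a$ into the branch through the other neighbour $c$. I would also record that the trivial path $\Path{z}=\{a\}$ lies below $M$, so any valid modification must keep $a$ inside both $\Path{m_1}$ and $\Path{m_2}$; this is precisely what forces us to \emph{extend} the paths across $a$ rather than merely retract them.

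Next I would perform the surgery. Subdivide the edge $a,b$ with a new vertex $p$ and the edge $a,c$ with a new vertex $q$; subdivision alone changes nothing, since every path maps to a path and all containments are preserved. In the refined tree set $\Path{m_1'}=\Path{m_1}\cup\{q\}$ and $\Path{m_2'}=\Path{m_2}\cup\{p\}$, leaving all other paths unchanged. The endpoints near $a$ are then $q$ for $m_1$ and $p$ for $m_2$, which are distinct, as required; both paths still contain $a$, so $z<m_1$ and $z<m_2$ survive; and the two paths now overlap only on $\{p,a,q\}$ while each retains a private branch (towards $b$, resp.\ $c$), so the incomparability $m_1\parallel m_2$ is preserved.

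The main work, and the step I expect to be the real obstacle, is to show that $\Model[']{\P}$ still represents $\P$, i.e.\ that enlarging $\Path{m_1}$ and $\Path{m_2}$ past $a$ neither destroys nor creates a comparability with an element $w\notin M$. The decisive case is $w$ above $M$: because $M$ is a module and a stable one, $w>m_1$ forces $w>m_2$ as well (otherwise transitivity would make $m_1$ and $m_2$ comparable), so $\Path{w}\supseteq\Path{m_1}\cup\Path{m_2}$ spans both branches at $a$ and therefore crosses both subdivided edges, whence $p,q\in\Path{w}$; thus $\Path{m_1'},\Path{m_2'}\subseteq\Path{w}$ and the relations $m_1<w$, $m_2<w$ are kept. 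For $w$ below $M$ the enlargement can only help, and for $w$ incomparable to $M$ no new relation can appear: $\Path{m_i'}\subseteq\Path{w}$ would already force $\Path{m_i}\subseteq\Path{w}$, while $\Path{w}\subseteq\Path{m_i'}$ would require $\Path{w}$ to use the single new vertex $q$ (resp.\ $p$), whose only neighbour outside $\Path{m_i'}$ is $c$ (resp.\ $b$), so such a $\Path{w}$ would have to have been contained in $\Path{m_i}$ already, contradicting incomparability. Recognising that this module-plus-antichain property is exactly what licenses the cross-over extension is the heart of the argument.
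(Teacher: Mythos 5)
Your proof is correct and follows essentially the same route as the paper's: stability of $M$ together with containment in $\Path{x}$ forces the module to be two-sided at $a$, and then the same subdivide-and-extend surgery (pushing each path across $a$ onto its own new vertex between $a$ and the opposite neighbour) is carried out, with the preservation of the order checked against outside elements via the module property—indeed your verification of that last step is more explicit than the paper's. The only difference is bookkeeping: you pin the local picture down to exactly one element of $M$ per side (which tacitly assumes distinct elements are assigned distinct paths), whereas the paper works with the two sets $B$ and $C$ of same-side elements; since incomparable paths on the same side of $a$ inside $\Path{x}$ with common endpoint $a$ would have to coincide, the two presentations come to the same thing.
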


\begin{proof}
 By hypothesis, since the elements of $M$ are all contained in an element $x$
 of $\P$, it means that in any representation $\Model{\P}$ of $\P$ the
 union of the paths of $M$ is a path. If in a representation
 $\Model{\P}$ of $\P$ all the paths of $M$ arrive at $a$,
 let $b$ and $c$ be the
 immediate neighbors of $a$ on $T$ along the path that hosts all the paths of $M$.
 Since the strong module considered is stable and in the representation
 every element lies under the path of $x$, the module is two-sided at $a$.
 Since $M$ is two-sided in the representation, its elements
 can be partitioned into two sets $B$ and $C$ as follows:
 An element $r$ is in $B$ if its path in $\Model{\P}$ passes by the vertex $b$.
 Similarly, an element $s$ is in $C$ if its path in $\Model{\P}$ passes by $c$
 (see Figure \ref{fig:Stable2-sided}).
 To obtain $\Model[']{\P}$ it suffices
 to subdivide the edges $a,b$ and $a,c$ of $T$. All
 the paths of the elements of $B$ that previously ended at $a$
 will now end between $a$ and $c$. Hence it is necessary to add $|B|$ new
 vertices between $a$ and $c$. In a symmetric manner, the paths of
 the elements of $C$ will be elongated to end on a new vertex between
 $a$ and $b$; thus it is necessary to add $|C|$ new vertices between
 $a$ and $b$. It is simple to see that the introduced modification
 does not alter the containment relationship. Any path that contained
 all the elements of $M$ will still contain all the elements
 of $M$. And any path that crossed the section of tree spanned
 by the elements of $M$ but did not contain them, will still not
 contain them.
 \end{proof}

\begin{figure}
 \begin{center}
  \includegraphics{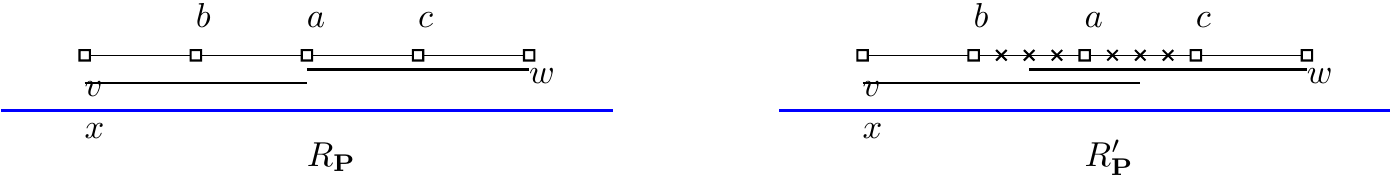}
 \end{center}

\caption{Modification of  the representation of a two-sided stable module.}
\label{fig:Stable2-sided}
\end{figure}

\begin{lem}
\label{lem:Clique-Proper}
Let $M$ be a strong clique module of a dually-CPT poset $\P$ and let $x$ be an element of
 $\P\setminus M$ that contains all the elements of $M$.
 If in a representation $\Model{\P}$ all the paths of $M$ arrive
 at a vertex $a$, then $M$ does not contain any other strong module.
\end{lem}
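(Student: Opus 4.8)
The plan is to show that the stated hypotheses force $\P(M)$ to be a total order, which is exactly the assertion that $M$ contains no non-trivial proper strong module. Indeed, since $M$ is a clique module, $\overline{G_{\P(M)}}$ is disconnected and its connected components are the maximal strong submodules of $\P(M)$; if all of these were singletons then $\P(M)$ would be a chain and its only strong submodules would be trivial, whereas a non-trivial maximal strong submodule is a component of $\overline{G_{\P(M)}}$ on at least two vertices and hence carries an edge of $\overline{G_{\P(M)}}$, i.e.\ a pair $u \parallel v$. Thus it suffices to assume, for contradiction, that $\P(M)$ is \emph{not} a total order, pick two incomparable elements $u \parallel v$ of $M$, and derive an impossibility from the presence of the dominating element $x$.

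First I would fix the local picture near $a$. Because $M$ is a clique module, $G_{\P(M)}$ is a join and therefore connected, so by the earlier remark (a two-sided strong module of a dually-CPT poset is stable) $M$ cannot be two-sided; as all paths of $M$ arrive at $a$ by hypothesis, $M$ is one-sided, so all of its paths except at most one trivial path pass through a common neighbour $b$ of $a$. Neither $\Path{u}$ nor $\Path{v}$ can be that exceptional trivial path: a trivial path located at $a$ is $\{a\}$, which is contained in every other path ending at $a$ and would make $u$ comparable to $v$, contradicting $u \parallel v$. Hence both $\Path{u}$ and $\Path{v}$ end at $a$ and run through $b$, so they share at least the edge $ab$. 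Since $u \parallel v$, neither contains the other, so the two paths must diverge: letting $w$ be the vertex farthest from $a$ lying on both, $\Path{u}$ leaves $w$ toward a neighbour $w_u$ while $\Path{v}$ leaves toward a neighbour $w_v$ with $w_u \neq w_v$; and because the two paths already agree on the edge $ab$, we have $w \neq a$, so $w$ also has a neighbour $w'$ on the common segment toward $a$, with $w'$, $w_u$, $w_v$ pairwise distinct.

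The contradiction then comes from $x$. By hypothesis $\Path{x}$ contains every path of $M$, so $\Path{x} \supseteq \Path{u} \cup \Path{v}$; in particular $\Path{x}$ contains $w$ together with its three distinct neighbours $w'$, $w_u$ and $w_v$. But $\Path{x}$ is a path of the host tree, and a path meets any one of its vertices in at most two of that vertex's neighbours, so it cannot contain a vertex with three of its neighbours. This impossibility shows that no incomparable pair can exist inside $M$; equivalently $\P(M)$ is a total order, and therefore $M$ contains no other strong module. The one genuinely delicate step, which I would be most careful about, is guaranteeing that the divergence vertex $w$ is distinct from $a$: if $u$ and $v$ branched apart exactly at $a$, then $a$ would carry only two incident branches and no forbidden claw would appear. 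This is precisely what one-sidedness supplies through the remark on two-sided modules; the rest is just the elementary fact that a path in a tree cannot contain a claw.
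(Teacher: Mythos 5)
Your proof is correct, and it takes a genuinely different route from the paper's. The paper's proof is direct and global: because every path of $M$ is contained in $\Path{x}$, the paths of $M$ are intervals of the line $\Path{x}$ all sharing the endpoint $a$, and the paper then asserts that any two such intervals are nested, so $\P(M)$ is a total order and hence has only trivial strong submodules. You argue locally and by contradiction: after reducing the statement to showing that $\P(M)$ is a chain (via the series decomposition of a clique module), you take an incomparable pair $u \parallel v$, invoke the paper's remark that two-sided strong modules are stable (together with connectivity of $G_{\P(M)}$) to force both paths through a common edge $ab$, and then exhibit a vertex $w \neq a$ with three distinct neighbours $w'$, $w_u$, $w_v$ all lying inside $\Path{x}$ -- a claw, impossible in a path. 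Both arguments run on the same engine, namely that the dominating path $\Path{x}$ forbids branching, but your detour buys genuine rigor: the paper's claim that intervals with common endpoint $a$ are nested silently skips the configuration in which two paths of $M$ leave $a$ in opposite directions along $\Path{x}$, which is exactly the two-sided case, and your one-sidedness step -- the one you correctly single out as delicate -- is precisely what excludes it. What the paper's version buys in exchange is brevity: it needs neither the remark on two-sided modules nor the (standard but unstated) identification of the strong modules of $\P(M)$ with the strong modules of $\P$ contained in $M$. One shared convention to note: like the paper, you tacitly assume that distinct incomparable elements receive distinct paths, so that ``neither contains the other'' yields divergence; this is a convention of the model rather than a gap particular to your argument.
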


\begin{proof}
Because of the element $x$, the union of all the
paths of the elements of $M$ in $\Model{\P}$ is included in
the path of $x$ and hence itself forms a path. Since all these paths
are bounded at $a$, then for any pair  of elements $p$ and $q$  of $M$
either the path of $p$ is contained in the path of $q$ or the converse.
There is no pair of non-adjacent vertices. As a consequence
it does not contain any other module.
\end{proof}

\begin{figure}
 \begin{center}
  \includegraphics[width=\textwidth]{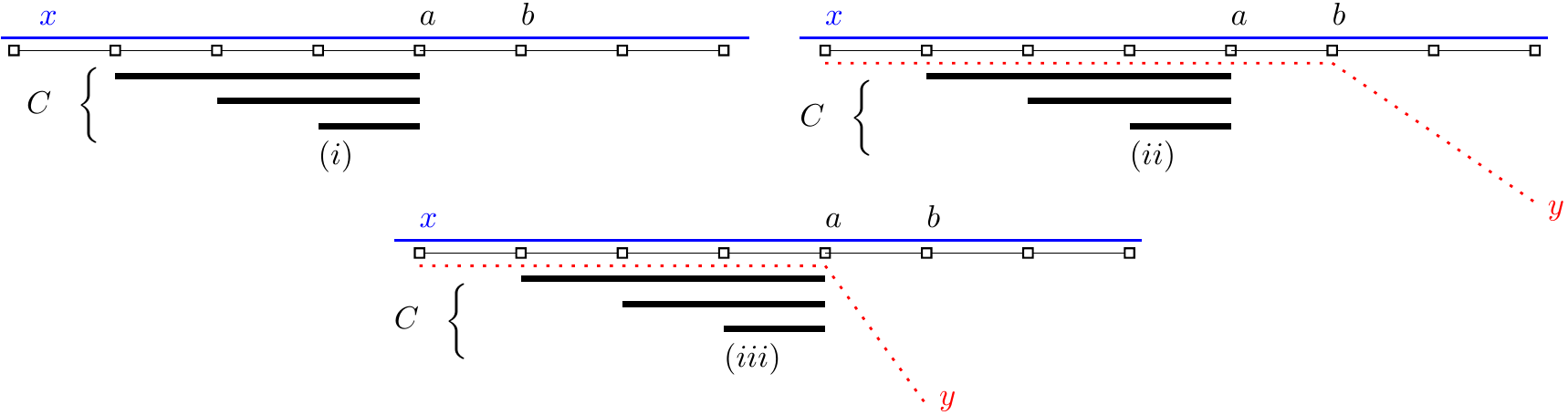}
 \end{center}
 \caption{Configuration of a clique path. The set of paths $C$ represents the paths of a strong clique module $M$}
 \label{fig:Clique-Path}

\end{figure}

Let $M$ be a strong clique module with representation $\Model{\P}$
where all the paths of the elements of $M$ stop at a vertex $a$.
We say that $M$ is \emph{free} in $\Model{\P}$ if there is at least one vertex
$b$ of $T$ such that $a,b$ is an edge of $T$, no path of $M$ passes through
$b$ and all the paths that contain the paths of  $M$ pass through $b$.
(See Figure \ref{fig:Clique-Path} $(i)$ and $(ii)$.)

\begin{lem}
\label{lem:Clique-Path}
 Let $M$ be a strong clique module of a dually-CPT poset $\P$,
   and let $x$ be an element of
 $\P\setminus M$ that contains all the elements of $M$.
 If $M$ is free in a representation $\Model{\P}$ where all the paths of $M$ arrive
 at a vertex $a$, then we can find a representation
 $\Model[']{\P}$ where all the endpoints of $M$ arrive on different
 vertices of $T$.
\end{lem}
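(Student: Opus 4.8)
The plan is to use the freeness of $M$ to create room along the edge $a,b$ and to push the near-endpoints of the paths of $M$ (which presently all coincide with $a$) onto distinct fresh vertices lying strictly between $a$ and $b$. First I would fix the combinatorial picture. By Lemma \ref{lem:Clique-Proper} the clique module $M$ contains no further strong module, so its $k=|M|$ paths are totally ordered by containment, say $\Path{m_1}\subsetneq\cdots\subsetneq\Path{m_k}$; by hypothesis they all end at $a$, and since $M$ is free none of them passes through $b$, so they all leave $a$ through some neighbour $p\neq b$ of $a$ in $T$.

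Before constructing the new model I would record which elements of $\P\setminus M$ can interact with this bundle. Since $M$ is a \emph{strong} module it collapses to a single vertex in the quotient, and hence every element comparable to $M$ is comparable to all of $M$ in the same direction. Thus an external element $s$ comparable to $M$ is either above all of $M$, in which case $\Path{s}$ contains $\Path{m_k}$ and, by freeness, passes through $b$; or below all of $M$, in which case $\Path{s}\subseteq\Path{m_1}$ and, as the paths of $M$ leave $a$ through $p\neq b$, the path $\Path{s}$ meets the edge $a,b$ at most at $a$. The essential consequence is that no external path is \emph{sandwiched}, with its near-endpoint trapped at $a$ strictly between two paths of $M$.

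The construction is then direct: subdivide the edge $a,b$ exactly $k$ times, obtaining new vertices $a_1,\dots,a_k$ with $a_1$ adjacent to $a$ and $a_k$ adjacent to $b$, and replace each $\Path{m_i}$ by the path that keeps its former $p$-side portion but now ends at $a_i$ instead of at $a$. The near-endpoints $a_1,\dots,a_k$ are pairwise distinct, so all endpoints of $M$ land on different vertices of $T$, as required. Moreover, for $i<j$ the path $\Path{m_j}$ reaches $a_j$ beyond $a_i$ while already containing $\Path{m_i}$ on the $p$-side, so the internal containment order of $M$ is preserved.

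Finally I would verify that $\Model[']{\P}$ still represents $\P$, arguing by the type of each remaining comparison; since subdividing an edge is harmless, only the extensions toward $b$ can alter a relation. An element above all of $M$ passes through $b$, hence through every $a_i$, and so still contains all the (now longer) paths of $M$. An element below all of $M$ lives on the $p$-side of $a$ or is a trivial path at $a$, so it is unaffected, and indeed becomes \emph{properly} contained in every $\Path{m_i}$. An element incomparable to $M$ either never meets the edge $a,b$, or traverses it completely and continues past $b$, so it sticks out beyond every $a_i$ and cannot become comparable to any $\Path{m_i}$. I expect this last verification to be the delicate point, and its crux is exactly the structural fact of the second paragraph: it is the impossibility of a sandwiched external path that guarantees that lengthening the paths of $M$ toward $b$ introduces no spurious containment.
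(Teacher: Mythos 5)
Your proposal is correct and takes essentially the same approach as the paper: the paper's own proof is a one-line instruction to re-use the subdivision technique of Lemma~\ref{lem:Stable-Path} on the edge $a,b$ provided by freeness, which is precisely the construction you spell out (subdivide $a,b$, move the endpoints of the nested paths of $M$ onto distinct new vertices, and use freeness to force every path containing $M$ through $b$). The only cosmetic slip is your dichotomy for elements incomparable to $M$ --- such a path may traverse the edge and stop exactly at $b$ --- but your argument covers this case unchanged, since that path still reaches beyond every new vertex $a_i$ and cannot contain any extended path of $M$ without having contained the old one.
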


\begin{proof}
 Since $M$ is free in $\Model{\P}$ we can re-use the technique used in Lemma
 \ref{lem:Stable-Path} by subdividing the edge $a,b$ of $T$.
\end{proof}

Thanks to  Lemmas \ref{lem:ModNotIncluded}, \ref{lem:Stable-Path},
and  \ref{lem:Clique-Path},
we know how modify a representation
% so that not all path
in almost all the cases.
% we are able to obtain a representation
% where all the paths of a strong module do not arrive on a same trivial path.
However, one case is not covered, namely,
when the module is a clique and it is blocked.
We say that a strong clique module bounded at a vertex $a$ in a representation
$\Model{\P}$ is \emph{blocked} if it is not free.
There are two reasons why $M$ may be blocked:
(1) It might be because a path that contains the elements of $M$
also stops at $a$, or
(2) because there are two elements $x$ and $y$ that contain
all the elements of $M$ and in $\Model{\P}$  their corresponding paths diverge
at $a$. (See Figure \ref{fig:Clique-Path}$(iii)$.)

\begin{rem}
 \label{rem:CliqueBlocked}
 Let $M$ be a strong clique blocked module of a dually-CPT.
 From Lemma \ref{lem:Clique-Proper}, we know that it does not contain
 any other strong module. Hence, a reconfiguration
 of this subposet is just a matter of relabelling the elements.
\end{rem}

%%%%%%%%%%%%%%%%%%%%%%%%%%%%%%%%%%%%%%%%%%%%%%%%%%%%%%%%%%%%%%%%%%%%%%%%%%%%%%%
\subsection{Partial ending of a module on a trivial path}
In the previous section, we proved that whenever a module is connected to
 an element $z$ of $\P$ represented by a trivial path in a representation
$\Model{\P}$ and all paths that represent the element of $M$ end at this path, we
can either alter the representation to ensure that all the paths
do not end on that trivial path or, the module is a clique and does not
contain any other modules. Hence  it is possible to alter the containment
relation.

If, in the completely opposite direction, a module $M$ is connected to an element
$z$ represented by a trivial path, but no path that represents an  element of
$M$ ends at this trivial path, it does not create any problem to change the
containment relation of the module.

The last case to consider is when $M$ is connected to a trivial
path, but only some paths of $M$ (not all) end at this trivial path.
We will prove that in that case an equivalent representation, where no path
of $M$ ends at this trivial path, can be obtained.

\begin{lem}
\label{lem:PartialEndingModule}
 Let $M$ be a strong  module of a dually-CPT poset $\P$ connected to an element $z$ ($z\notin M$).
 If in a representation $\Model{\P}$ the element $z$ is represented by
 a trivial path $\Path{z}$ and the paths of some elements of $M$ end at the path
 of $\Path{z}$ and some other paths of elements of $M$
 properly contain $\Path{z}$, then there exists an
 equivalent representation $\Model[']{\P}$  where no element of $M$ ends
 at a trivial path.
\end{lem}

\begin{proof}
Let $I$ denote the set of elements not in $M$ such
that the paths of the elements of $I$ are contained in
the paths of the elements of $M$. In the representation
$\Model{\P}$  all the paths that represent the element of $I$
are all contained in $\cap_{m\in M} P_m$.

Since by hypothesis not all the paths of $M$ end  at
a trivial path, if there are some elements of $M$
that end paths represented by trivial paths, there
are at most two trivial paths in that situation.
Call these trivial paths $y$ and $z$.

Let us assume that the part common to all the paths of $M$
in $\Model{\P}$ is on a horizontal line, and that
\emph{w.l.o.g.}~that $\Path{y}$ is the leftmost
and $\Path{z}$  is the rightmost of this common part.
We assume further, in the
representation $\Model{\P}$, that $\Path{z}$ lies on vertex $a$ of $T$
and $\Path{y}$ lies on vertex $b$ of $T$.

We denote by $L$ (resp.~$R$) the set of all elements of $M$
whose paths in $\Model{\P}$ end at $b$ (resp.~at $a$.) Note that
there is at most one element of $M$ that belongs
to both $L$ and $R$, since the containment relation is proper.

There are two cases to consider:  (1) either there is no element
$x$ such that all the paths of $M$ are contained in the
path of $x$, or (2) such an element $x$ exists.

~\\(1) For the first case, let us assume that such an element
does not exists. Hence there is no path in $\Model{\P}$ that contains
any path of the elements of $M$. In that case, to obtain
an equivalent representation, in $T$ we can add one path
with $|M|$ new vertices connected to $a$ and another path
 with $|M|$ new vertices connected to $b$. Since the
 poset induced by $M$ is CI, it suffices to represent
 this module as a containment of intervals using these
 new branches for the endpoints. The transformation
 process is presented in Figure \ref{fig:PartialEndingCase1}.

The containment relation between elements of $R$ (resp. $L$)
and $I$ remain unchanged. Moreover, for any element $q$ not connected
to $M$, since the endpoints of the paths of the elements of $M$
have been relocated in the two new branches, there is no
containment relation between $\Path{q}$  and the paths
of the elements of $M$, since $\Path{q}$ does not contain any of the new
branches in $\Model[']{\P}$.

~\\(2) Let us now consider the case when there is an element $x$
not in $M$ such that in $\Model{\P}$ the path of $x$ contains all
the paths of the elements of $M$. In the host tree $T$ we denote
by $c$ the neighbor of $a$ such that no path of $R$ passes through $c$
but some paths of elements of $M$ do (by our initial hypothesis).
Let $d$ be the  neighbor of $b$ in $T$ such that paths of some
elements of $M$ pass through but no element of $L$ does.

To obtain an alternative representation $\Model[']{\P}$
we subdivide the edge $a,c$  $|R|$ times  and subdivide the edge $b,d$ $|L|$ times.
  (This transformation
is presented in Figure \ref{fig:PartialEndingCase2}). Then it is just
a matter of extending the paths of the elements in $R$ such
that  they end on a vertex located between $a$ and $c$.
For each element of $R$, its new ending vertex is determined
according to the containment relation in $R$.
For the elements of $L$, we proceed in a similar manner.

It remains to prove that the new representation still represents the
poset $\P$. The only paths that are transformed are the paths that
correspond to elements of $R$ and $L$. Without loss of generality, let $l$ be an element of $R$
and let
$\Path{l}$ be its path in $\Model[']{\P}$.  Since $\Path{l}$ has been extended, it is clear
that all the paths in $\Model{\P}$ that were contained in $\Path{l}$ remain contained
in $\Model[']{\P}$. In addition, since the extension occurred between $a,c$ or $b,d$.
Equivalently let $k$ be an element of $\P$. If $\Path{l} \subset \Path{k}$ in $\Model{\P}$ then
$\Path{l} \subset \Path{k}$ in $\Model[']{\P}$. If $k$ is an element of $R$, by the
transformation
we ensure that the containment relation is preserved. If $k$ is not an element of $R$,
then in $\Model{\P}$, the path $\Path{k}$ passed by vertex $c$ of $T$, hence by extending
 $\Path{l}$, it will not reach $c$, then it is still contained in $\Path{k}$ in
 $\Model[']{\P}$.

Let us now consider an element $q$ such that $\Path{q}$ intersects  $\Path{l}$ but there
is no containment relation in $\Model{\P}$. If $\Path{l} \cup \Path{q}$ is not a path in
$\Model{\P}$
then it contains a claw pattern and this pattern will be preserved
in $\Model[']{\P}$.
Let us now consider the case when $\Path{l} \cup \Path{q}$ forms a path in
$\Model{\P}$.
 If $\Path{q}$ passes through $a$ in $\Model{\P}$ it has one endpoint contained between the
 endpoint
 of $\Path{l}$. Thus the first endpoint of $\Path{q}$  is  at the left of $a$ in $\Model{\P}$
 and
 the  endpoint at the right of $c$ (possibly $c$).  Since $\Path{l}$ does not reach $c$ in
 $\Model[']{\P}$,
 the overlap relation is preserved in the new representation. If both paths were disjoint,
 they remain disjoint in $\Model[']{\P}$.

\end{proof}

\begin{figure}[h!]
 \begin{center}
  \includegraphics{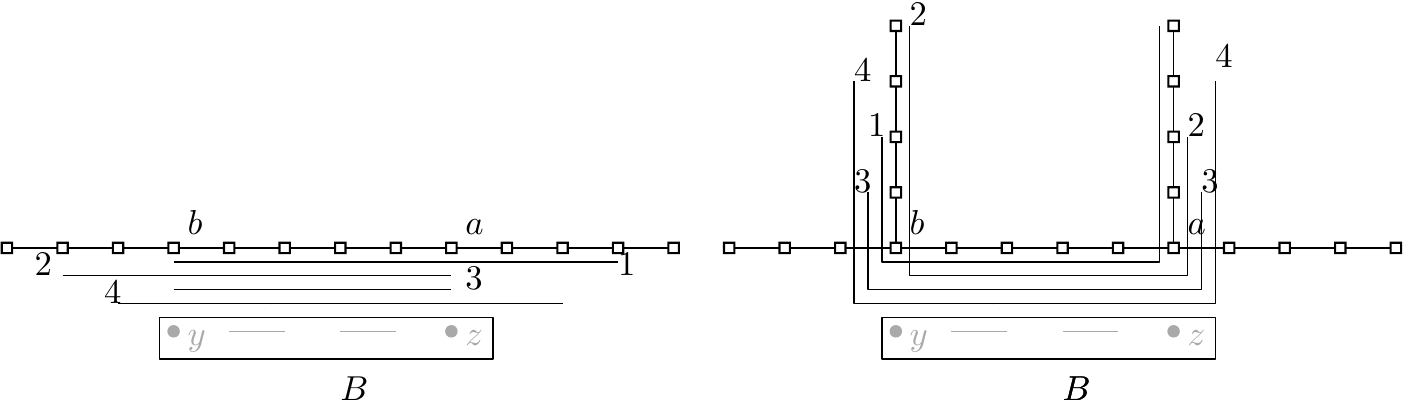}
 \end{center}
 \caption{Illustration of case (1) of Lemma \ref{lem:PartialEndingModule}.
 Elements $1,2,3$ and $4$ are parts of the modules. The module is connected
 to the elements represented by the paths in the box $B$. Elements $1$ and $3$ belong to $L$
 and elements $2$ and $3$ belong to $R$.}
 \label{fig:PartialEndingCase1}
\end{figure}

\begin{figure}[h!]
 \begin{center}
  \includegraphics{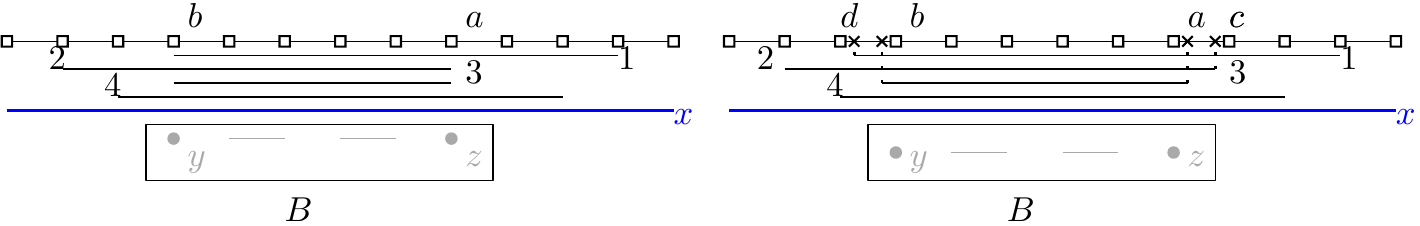}
 \end{center}
 \caption{The same example as in Figure \ref{fig:PartialEndingCase1}, but this time
 there is an element $x$ not in $M$ that contains all the elements of $M$ and that
 prevents performing the modification of  case (1).}
 \label{fig:PartialEndingCase2}
\end{figure}

From Lemmas \ref{lem:ModNotIncluded}, \ref{lem:Stable-Path}, \ref{lem:Clique-Path} and
Remark \ref{rem:CliqueBlocked}, we can summarize the results of this section
with the following theorem:

\begin{thm}
Let $\P$ be a dually-CPT poset. Either for  each strong module $M$
of $\P$ there exists a representation $\Model{\P}$
such that all the paths of $M$ do not end on a trivial
path, or $M$ is a clique blocked module.
 \label{thm:StandardizedRepresentation}

\end{thm}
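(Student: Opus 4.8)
The plan is to fix an arbitrary strong module $M$ and carry out an exhaustive case analysis on the way the paths of $M$ terminate, invoking in each configuration the lemma tailored to it. As a starting point, the concluding theorem of Section~\ref{sec:TrivPathsModules} lets me assume a representation $\Model{\P}$ in which no element of any strong module is a trivial path, so the only obstruction the statement refers to is a path of $M$ whose endpoint lies at a host-tree vertex occupied by a trivial path $z \notin M$. The first reduction handles the partial-ending situation: if some but not all paths of $M$ terminate at such a vertex, Lemma~\ref{lem:PartialEndingModule} already produces an equivalent representation in which no element of $M$ ends there, so the first alternative of the statement holds.

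The substance of the argument is the complete-ending situation, where every path of $M$ arrives at a common vertex $a$ carrying the trivial path. Here I would first determine the type of $M$: a two-sided strong module is stable and a one-sided connected strong module is a clique (the two structural results established just before Lemma~\ref{lem:ModNotIncluded}), while a disconnected module is stable by definition; hence $M$ must be either stable or clique, never prime. I would then branch on whether some element of $\P \setminus M$ contains all of $M$. If no such element exists, Lemma~\ref{lem:ModNotIncluded} reroutes the endpoints onto a fresh branch attached at $a$, keeping each path through $a$ (so the containments with $z$ survive) while removing every endpoint from $a$. If a containing element $x$ exists, the union of the paths of $M$ forms a genuine path inside $\Path{x}$, and I dispose of the stable case by the two-sided subdivision of Lemma~\ref{lem:Stable-Path}.

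The clique case in the presence of a containing $x$ is where the dichotomy of the theorem surfaces, and I would split it by the free/blocked alternative. If $M$ is free, Lemma~\ref{lem:Clique-Path} separates the endpoints exactly as in the stable case. If $M$ is blocked, no subdivision can free the endpoints, and this is precisely the exceptional second branch of the statement; to confirm both that this is the genuine residual case and that it is harmless for the later substitution argument, I would invoke Lemma~\ref{lem:Clique-Proper}, which shows that a clique module arriving at a single vertex and contained in some $x$ has totally nested paths and therefore contains no further strong module, together with Remark~\ref{rem:CliqueBlocked}.

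I expect the main obstacle to be making the case split visibly exhaustive and, above all, arguing that clique-blockedness is a genuine representation-independent obstruction rather than an artefact of a poor layout. Blockedness arises from two distinct sources --- a containing path that also stops at $a$, or two containing paths that diverge at $a$ --- and the delicate point is to verify that in neither source is there a neighbor of $a$ into which the endpoints of $M$ could be rerouted while preserving every containment with the elements outside $M$; this is exactly where the freeness condition and the nesting supplied by Lemma~\ref{lem:Clique-Proper} do the decisive work.
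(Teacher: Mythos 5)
Your proposal is correct and follows essentially the same route as the paper: the paper "proves" this theorem simply by summarizing Lemmas~\ref{lem:ModNotIncluded}, \ref{lem:Stable-Path}, \ref{lem:Clique-Path} and Remark~\ref{rem:CliqueBlocked}, and your case analysis (partial ending via Lemma~\ref{lem:PartialEndingModule}, complete ending split by module type and by the existence of a containing element, with the blocked clique as the residual branch) is exactly the intended assembly of those results --- indeed slightly more careful than the paper, which omits citing Lemma~\ref{lem:PartialEndingModule} in its summary.
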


We call a representation that fulfills the condition of the previous theorem
a \emph{normalized representation}.

\section{Substitution}
\label{sec:Substition}

The last step to obtain our main result is to prove
that for any dually-CPT poset $\P$  all the posets
$\mathcal{Q}=\{\Q_1,\ldots,\Q_l\}$ that are associated to $\P$
admit a CPT representation. Let us consider one particular poset $\Q$
of this set. If $\Q$ is associated to $\P$ it means by definition
that their underlying  comparability graphs are identical. We assume
that  $\P$ is not CI, otherwise the results already stand
from Theorem   \ref{t:strong-property}.
Thus we deduce that the quotient poset of $\P$ is not CI, by
 Theorem \ref{t:dually-nprime}, and thus is prime.
 Since $\P$ and $\Q$ are associated,
 by Property \ref{p:associated-posets} they admit the same
 set of strong modules. The quotient $\H$ of $\P$ is
 obtained by keeping one element of each strong maximal
 module and the quotient  $\K$ of $\Q$ is
 either equal to $\H$ or to its dual  $\H^d$. Let us consider
 that $\H$ is equal to $\K$.

 To obtain a representation for $\Q$, we will use the normalized
 representation $\Model{\P}$ obtained for $\P$. From
 $\Model{\P}$ it is immediate to obtain a representation $\Model{\H}$
 for $\H$ as it suffices to keep one path for each strong
 module of $\P$. In addition, since it is obtained by
 removing paths from a normalized representation,
 we can consider that all the paths that correspond
 to strong modules which are not clique blocked
 modules, do not end on trivial paths of other
 elements. Then we will show
 that for such elements, we can replace this path
 by an arbitrary CI poset. Finally, to obtain
 a CPT representation for $\Q$ it suffices to replace
 each path that is a representative of a strong module,
 by the corresponding CI poset in $\Q$.  For the
 clique blocked modules, as they do not contain
 other strong modules, they correspond to total orders,
 hence the representation can be preserved, but the labelling
 has to be changed to suit the total order in $\Q$.

Let ${v_0}$ be an element of $\H$ that is a representative of some maximal
strong module of $\P$ that is not a clique blocked module.
Let $\Path{v_0}=(x_1,x_2,\ldots,x_k)$ be its path in $\Model{H}$.
We will assume that $k$ is at least $4$. We will show how to replace
$\Path{v_0}$ by a CI poset $\N$. Let $\Model{\N}=\{I_i\}_{1\leq i\leq n}$
be a $CI$ representation of a poset $\N$ whose
vertices are $u_1,u_2,\ldots, u_m$.

Assume that the intervals $I_i$ (subpaths of a path $I$) are non-trivial, 
no two of them share an end vertex
and there is an edge $cd$ of $I$ contained in 
the total intersection of the intervals $I_i$ -- this assumption 
is guaranteed by Proposition \ref{p:compresing-CI-model}.
 Name $a$ and $b$ the end vertices
of the interval union of the intervals $I_i$. Clearly $[c,d]\subset [a,b]$.
We also assume  that
$a$, $b$, $c$ and $d$ are distinct, and that  neither $c$ nor $d$ are end vertices of an
interval $I_i$.

%\noindent
\begin{process} \label{p:substitution}
The process of replacing
in the representation $\Model{\H}$ the path $\Path{v_0}$ by the
intervals $\{I_i\}_{1\leq i \leq n}$ of the representation $\Model{\N}$ consists of:

\begin{itemize}
\item[(i)] subdividing the edges $x_1 x_2$  and $x_{k-1} x_k$ of $T$
by  adding in each one $n-1$ vertices.

\item[(ii)]subdividing the edge $c d$ of $I$ by adding as many vertices as there are in $T$
between $x_2$ and $x_{k-1}$.

\item[(iii)] removing from $\Model{\H}$ the path $W_{v_0}$ and embedding in its place the
intervals of $S$ in such a way that the vertices $a$, $c$, $d$, $b$ and all others between
them match
with the  vertices $x_1$, $x_2$, $x_{k-1}$, $x_k$ and all others between them,
respectively,
as it is shown in Figure \ref{f:grande}.
\end{itemize}
\end{process}

\begin{figure}
  \centering
  \includegraphics[width=15cm]{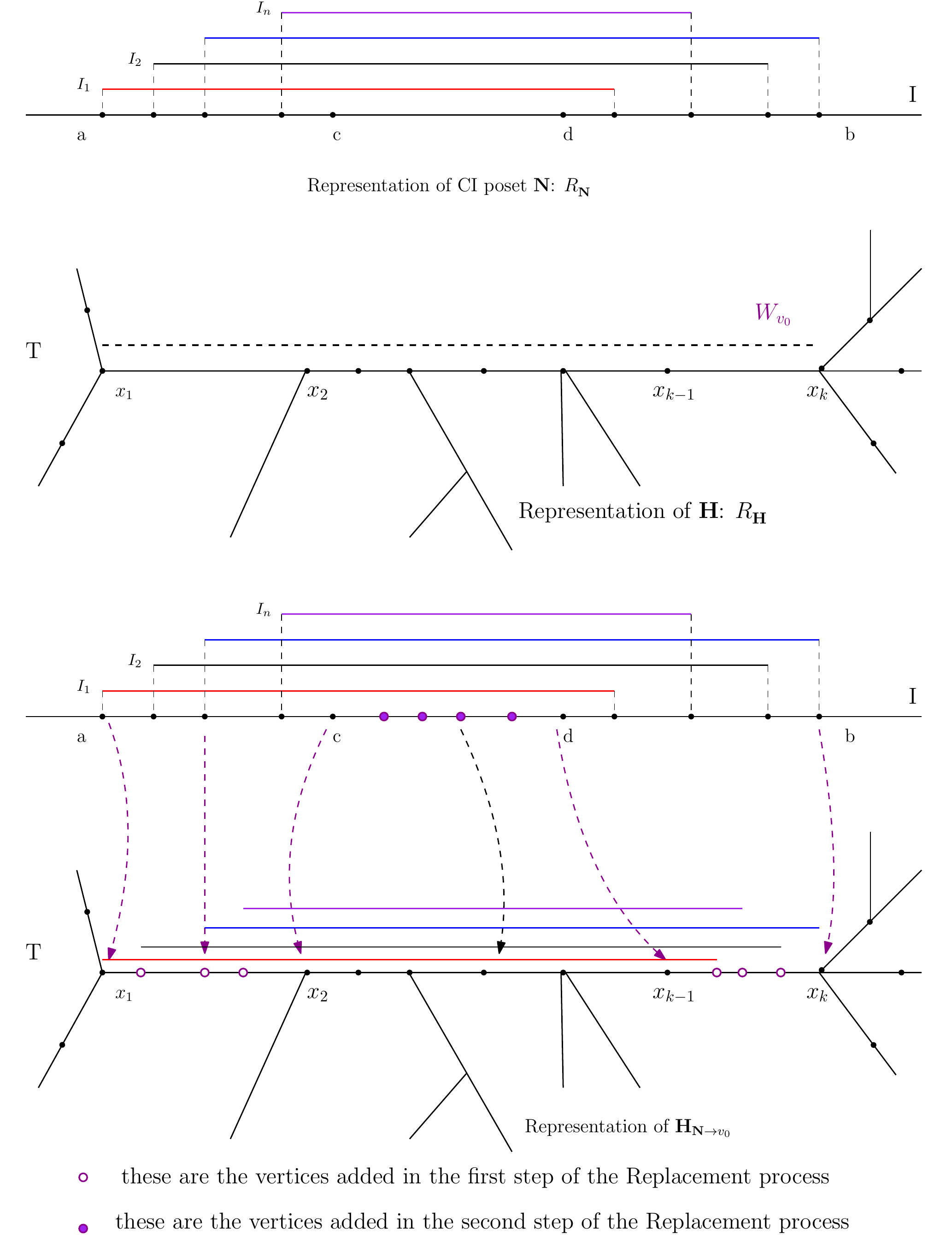}
  \caption{Description of the Replacement process}\label{f:grande}
\end{figure}

%\begin{figure}
 % \centering
 % \includegraphics[width=15cm]{sust_fig}
 % \caption{}\label{f:sustitution-1}
%\end{figure}

%\begin{figure}
 % \centering
 % \includegraphics[width=15cm]{sust_fig_otra}
 % \caption{}\label{f:sustitution-2}
%\end{figure}
%lllllllllllllllllllllllllllllllllllllllllllllllllllllllllllllllllllllllllllllllllllllllllll

\begin{lem}\label{l:substitution-1}
If in $\Model{\H}$ the path $\Path{v_0}$ that represents a module
of a dually-CPT poset $\P$ does not end on trivial paths, then
we can obtain the representation $\Model{\H_{\N\rightarrow v_0}}$
by replacing $\Path{v_0}$ by the
intervals $\{I_i\}_{1\leq i \leq n}$ of the representation $\Model{\N}$
in $\Model{\H_{\N\rightarrow v_0}}$. If
any of the paths $I_{i}$ contains (resp.~is contained in) a path $\Path{v}$, then all
the paths $I_i$ contain
(resp.~are contained in) $\Path{v}$.

Moreover, a path $\Path{v}$ of $\Model{\H}$ contains (is contained in) $\Path{v_0}$
if and only if $\Path{v}$ contains (is contained in) every one of the intervals $I_i$ in
$\Model{\H_{\N\rightarrow v_0}}$.
\end{lem}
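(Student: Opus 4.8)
The plan is to establish both assertions at once by classifying every path $\Path{v}$ with $v\neq v_0$ according to how it meets $\Path{v_0}$ in $\Model{\H}$, and then following that path through the Replacement process to read off how it meets the family $\{I_i\}$ in $\Model{\H_{\N\rightarrow v_0}}$. Before the case analysis I would record the three invariants that make the argument run. First, since every interval $I_i$ contains the edge $cd$, after the embedding step each $I_i$ covers the whole interior stretch matched with $x_2,\ldots,x_{k-1}$, and $\bigcup_i I_i$ is exactly the embedded image of $\Path{v_0}$, i.e.\ the (subdivided) portion of $T$ running from $x_1=a$ to $x_k=b$, while $\bigcap_i I_i=[c,d]$ is matched with the interior $x_2,\ldots,x_{k-1}$. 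Second, the two end regions obtained by subdividing $x_1x_2$ and $x_{k-1}x_k$ consist solely of degree-two vertices carrying nothing but the pairwise distinct extremities of the $I_i$; consequently no path other than the $I_i$ can terminate or branch inside an end region. Third, subdividing an edge changes neither whether another path passes through the affected stretch nor whether it is disjoint from it or branches away from it.

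The easy half concerns paths not contained in $\Path{v_0}$. If $\Path{v_0}\subsetneq\Path{v}$ in $\Model{\H}$, then $\Path{v}$ traverses the whole stretch from $x_1$ to $x_k$, hence (after the subdivisions) every inserted vertex, so it contains $\bigcup_i I_i$ and therefore each $I_i$, properly since it runs past both ends. The same reasoning, run backwards, shows that if any single $I_i$ is contained in $\Path{v}$ then $\Path{v}$ cannot stop inside an end region, so it must reach $x_1$ and $x_k$, whence it contains the whole embedded $\Path{v_0}$ and thus every $I_j$; this settles the ``is contained in'' direction of the uniformity statement. If instead $\Path{v}$ and $\Path{v_0}$ are incomparable in $\Model{\H}$ (disjoint, overlapping at one end, or branching away), then $\Path{v}$ meets the embedded region only as far as it met $\Path{v_0}$, and since it can neither end nor branch within the degree-two end regions it meets every $I_i$ in the same qualitative way and stays incomparable to all of them simultaneously. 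Since containing every $I_i$ is the same as containing $\bigcup_i I_i$, these observations also supply the ``contains'' implication of each \emph{moreover} equivalence.

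The main obstacle is the contained case $\Path{v}\subsetneq\Path{v_0}$, where I must show $\Path{v}$ lies in \emph{every} $I_i$, i.e.\ inside $\bigcap_i I_i=[c,d]$. This can fail precisely when $\Path{v}$ reaches an extremity $x_1$ or $x_k$ of $\Path{v_0}$: such a $\Path{v}$ would escape those $I_i$ whose matching extremity has been pushed strictly inside an end region. The hypothesis that $\Path{v_0}$ does not end on trivial paths kills the trivial instance of this, and I would upgrade it to the non-trivial instance using Lemma~\ref{l:neces}: the closed down-set of $v_0$ is a $CI$ poset, so the paths contained in $\Path{v_0}$ form a $CI$ family beneath $\Path{v_0}$. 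Hence, in the spirit of Property~\ref{p:compresing-CI-model}, I may assume $\Model{\H}$ has been adjusted beforehand by subdividing the first and last edges of $\Path{v_0}$ and relocating the left (resp.\ right) extremities of these contained paths onto fresh vertices in the prescribed order; this preserves all comparabilities, because any path touching such an extremity is either itself in the down-set (handled by the $CI$ re-representation) or meets it only at $x_1$ or $x_k$, where a ``touch'' and a ``disjointness'' are equally incomparable. After this adjustment every $\Path{v}\subsetneq\Path{v_0}$ lies in the interior $x_2,\ldots,x_{k-1}$, hence in $[c,d]$, hence properly in every $I_i$ (each $I_i$ reaches beyond the interior into both end regions). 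Combined with the fact that being contained in every $I_i$ is the same as being contained in $[c,d]\subseteq$ the embedded $\Path{v_0}$, this gives the ``contained in'' direction of uniformity and both directions of each \emph{moreover} equivalence, completing the verification.
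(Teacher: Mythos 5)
Your proposal is correct, and its first two paragraphs are exactly the paper's own argument: the paper's proof consists precisely of the two facts you isolate as invariants (no old path has an end-vertex strictly inside the subdivided edges $x_1x_2$ and $x_{k-1}x_k$, and every $I_i$ contains the stretch matched with $[c,d]$), from which the ``contains'' and ``incomparable'' cases follow just as you describe. Where you genuinely go beyond the paper is the contained case. The paper's proof silently assumes that no path $\Path{v}\subsetneq\Path{v_0}$ reaches an extremity $x_1$ or $x_k$ of $\Path{v_0}$, while the stated hypothesis (no trivial path at the ends of $\Path{v_0}$) only excludes the trivial instance of this, exactly as you observe. This is a real issue, not pedantry: for the fence $p<q$, $r<q$, $r<s$ represented on the host path $1\mbox{-}2\mbox{-}\cdots\mbox{-}6$ by $\Path{q}=\{1,\dots,5\}$, $\Path{p}=\{1,2\}$, $\Path{r}=\{3,4,5\}$, $\Path{s}=\{3,\dots,6\}$ (a representation with no trivial paths at all), replacing $q$ by a two-element antichain leaves $\Path{p}$ contained in one interval $I_i$ but not in the other, so the lemma's conclusion fails for this $\Model{\H}$; some preprocessing of the representation is therefore unavoidable. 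Your repair --- subdividing the first and last edges of $\Path{v_0}$ and, exploiting the CI structure of the down-set of $v_0$ (Lemma~\ref{l:neces}, in the spirit of Property~\ref{p:compresing-CI-model}), pulling the offending extremities onto fresh interior vertices in nesting order --- is the right fix, and it is what the paper would need either as a preprocessing step here or folded into its definition of normalized representation. One detail to tighten: your preservation argument classifies paths touching $x_1$ as either in the down-set or meeting the region ``only at $x_1$,'' which omits paths that run through the edge $x_1x_2$ into the interior (overlapping paths, and paths containing $\Path{v_0}$); these are in fact harmless precisely because any path using that edge covers the whole subdivided edge and hence still contains every relocated endpoint, but the case should be stated explicitly.
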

%lllllllllllllllllllllllllllllllllllllllllllllllllllllllllllllllllllllllllllllllllllllllllllll

\begin{proof}
This result is a direct consequence of two facts: first, that in $\Model{\N}$
no interval $\Path{v}$ of $\Model{\H}$ has an end-vertex between $x_1$ and $x_2$,
nor between $x_{k-1}$ and $x_{k+1}$, and second, that in $\Model{\N}$,
all the intervals $I_i$ contain the interval $x_2x_{k-1}$. See Figure \ref{f:grande}.
\end{proof}

%
%
%lllllllllllllllllllllllllllllllllllllllllllllllllllllllllllllllllllllllllllllllllllllllllllllllllllllllllllllll
% \begin{cor} \label{c:substitution-2}
% Let $\Model{\H}$ be a $CPT$ model of the quotient poset $\H=\P/\mathcal{M}$.
% If we replace each path $\Path{i}$ that represent a non clique-blocked module $M_i$ of $\P$
% by a CI model
% $\Model{M_i}$ then we obtain a CPT representation of $\P$.
% \end{cor}
%
%lllllllllllllllllllllllllllllllllllllllllllllllllllllllllllllllllllllllllllllllllllllllllllllllllllllllllllllll
%
\begin{lem}\label{l:substitution-3}
If in $\Model{\H}$ the path $\Path{v_0}$, that represents a blocked clique module
of a dually-CPT poset $\P$,  ends on a trivial path, then
we can obtain the representation $\Model{\H_{\N\rightarrow v_0}}$
by replacing $\Path{v_0}$ by the a collection of paths that
represent a clique.
\end{lem}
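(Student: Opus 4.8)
The plan is to exploit the structural poverty of a blocked clique module, recorded in Remark~\ref{rem:CliqueBlocked}. First I would pin down what has to be inserted. Since $\Path{v_0}$ represents a blocked clique module $M$, all paths of $M$ arrive at one common vertex of the host tree, so Lemma~\ref{lem:Clique-Proper} applies and $M$ contains no further strong module; hence $\P(M)$ is a total order on its ground set. The block that the substitution requires is a poset $\N=\Q_i$ associated to $\P(M)$, and because its comparability graph is complete every such $\N$ is again a total order on the same $m=|M|$ elements. Thus the whole task reduces to replacing the single path $\Path{v_0}$ by a chain of $m$ nested paths that realises the linear order of $\N$, while leaving every comparability with the exterior intact.

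Next I would carry out the geometric replacement. Let $a$ be the common vertex at which $\Path{v_0}$ ends (the one hosting the trivial path), and let $b$ be the neighbour of $a$ along $\Path{v_0}$. The key enabling fact is that a family of nested intervals sharing a single endpoint represents exactly a total order, so I would subdivide the edge $a,b$ of the host tree sufficiently often (at least $m-1$ times) and install $m$ paths $\Path{w_1}\subset\Path{w_2}\subset\cdots\subset\Path{w_m}$, all ending at $a$ and occupying successively longer initial segments of the subdivided $\Path{v_0}$. Labelling them so that $\Path{w_1},\ldots,\Path{w_m}$ follow the increasing order of $\N$ reproduces $\N$ inside the module slot. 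This works precisely because $\N$ is a clique: a prime or a two-sided module could never be represented with all its paths anchored at the single vertex $a$, which is exactly why the blocked case had to be isolated in Theorem~\ref{thm:StandardizedRepresentation}. Note that the subdivisions introduce no endpoints lying outside the original footprint of $\Path{v_0}$.

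I would then verify that all cross-relations survive. Every $v\notin M$ is homogeneous with respect to $M$, so in $\Model{\H}$ it bears one and the same relation to all the module paths, and relabelling inside $M$ cannot disturb this. Concretely, any $\Path{v}$ that contained $\Path{v_0}$ passes through $a$ and along the whole segment of $\Path{v_0}$, hence still contains each $\Path{w_i}$ (this is the content of Lemma~\ref{lem:PassingThroughA}); the two relations to check are the path(s) above $M$ that block it at $a$ and the trivial path $z=\{a\}$ sitting below $M$, and both are retained since every $\Path{w_i}$ still contains $a$ and still lies inside the containing paths. Overlapping and disjoint relations are preserved for the same reason, as no endpoint of an exterior path lies on the newly subdivided edge.

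The main obstacle, and the reason this lemma stands apart from Lemma~\ref{l:substitution-1}, is that here I am forced to keep all inserted paths anchored at the one vertex $a$: the module is blocked precisely because the $a$-side cannot be freed. The argument therefore rests entirely on the fact that a blocked clique module is a total order, so a common-endpoint nested family is adequate; the only genuine verification is that nesting the new paths along the subdivided edge $a,b$ neither manufactures a spurious containment with an exterior path that merely overlapped the module, nor destroys a containment with one of the (at most two) paths blocking the module at $a$.
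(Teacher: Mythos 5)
Your overall strategy is the same as the paper's: by Lemma~\ref{lem:Clique-Proper} (Remark~\ref{rem:CliqueBlocked}) the blocked clique module is a total order, so it suffices to insert a nested family of paths all anchored at the vertex $a$ hosting the trivial path, obtained by subdividing one edge of the host tree. However, you place the new endpoints on the wrong edge, and this is a fatal flaw rather than a cosmetic difference. The paper subdivides the edge $bc$ incident to the extremity $b$ of $\Path{v_0}$ \emph{far} from $a$, so that every inserted path contains $\Path{v_0}$ minus (part of) its last edge. You subdivide the edge $ab$ incident to $a$ and take ``successively longer initial segments'', so that $\Path{w_1},\dots,\Path{w_{m-1}}$ lie inside the single original edge $ab$.

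This placement breaks cross-relations. There can be an element $q$ of $\H$ with $z<q$ and $q\parallel v_0$ whose path crosses $a$ and covers the first edge of $\Path{v_0}$ without containing $\Path{v_0}$: take $\Path{z}=\{a\}$, module paths $\Path{m_1}=[a,b_1,b_2]$, $\Path{m_2}=[a,b_1,b_2,b_3]$, blocking path $\Path{x}=[a,\dots,b_6]$ stopping at $a$, and $\Path{q}=[q_1,q_2,a,b_1]$; since the smallest module path has two edges, $\Path{q}$ overlaps both module paths, so this is a legitimate normalized, blocked configuration satisfying the hypothesis of the lemma. After you subdivide $ab_1$, the path $\Path{q}$ --- which contains the \emph{whole} edge $ab_1$ --- contains every subdivision vertex, hence properly contains $\Path{w_1}=[a,s_1]$: you have manufactured a comparability between $q$ and a clique element that are incomparable in $\Q$. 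The same placement also inverts the relation with any element $y<M$, $y\neq z$, whose path covers the edge $ab_1$ (it must end up \emph{inside} every inserted path, but instead it now contains $\Path{w_1}$). Your justification that ``no endpoint of an exterior path lies on the newly subdivided edge'' does not address either case, because the damage comes from exterior paths containing the entire subdivided edge; indeed you flagged precisely this check as ``the only genuine verification'' but never carried it out, and it fails. The paper's far-end placement passes it: any path covering $a$ and the edge $ab_1$ without containing $\Path{v_0}$ stops short of $c$, hence contains no inserted path (each of which properly contains $[a,c]$), while paths above $M$ and the elements below $M$ keep their containments.
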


\begin{proof}
 Let us assume that $\Path{z}$ is the trivial path that
 $\Path{v_0}$ ends on in $\Model{\H}$. Let us denote
 by $a$ the vertex of the host tree that hosts $\Path{z}$.
 Since the containment relation is proper, we can assume
 that $\Path{v_0}$ passes through at least two vertices of the
 host tree. One of the extremities of $\Path{v_0}$ is $a$. Let us
 call the other extremity $b$. Since the length of $\Path{v_0}$ is
 at least two, we know there exists in the host tree a
 vertex $c$ that is the immediate neighbor of $b$ on
 the path going to $a$. The vertex $c$ is possibly equal
 to $a$. By subdividing an appropiate number of times
 the edge $bc$ of the host tree, we can add as many
 paths as we need to place a clique module. From the
 transformation, it is easy to see that the containment
 relation is preserved with respect to the module.

\end{proof}

%%%%%%%%%%%%%%%%%%%%%%%%%%%%%%%%%%%%%%%%%%%%%%%%%%%%%%%%%%%%%%%%%

% %tttttttttttttttttttttttttttttttttttttttttttttttttttttttttttttttttt
% \begin{cor} \label{c:substitution-3}
% If $\mathbf{P}=(X,P)$ is non-degenerated-dually-$CPT$, then $\mathbf{P}$
% is non-degenerated-strong-$CPT$.
% \end{cor}
% %tttttttttttttttttttttttttttttttttttttttttttttttttttttttttttttttttt
We restate here our main theorem:

\begin{thm}
\label{thm:Main2}
 A poset $\P$ is strongly-CPT if and only if it is dually-CPT.
\end{thm}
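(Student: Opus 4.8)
The plan is to prove the theorem by establishing the non-trivial direction, namely that every dually-CPT poset is strongly-CPT; the reverse inclusion is immediate from the definitions, since a strongly-CPT poset and its dual both admit CPT representations. So let $\P$ be dually-CPT and let $\Q$ be any poset associated to $\P$ (that is, $G_{\Q}=G_{\P}$); the goal is to produce a CPT representation of $\Q$. First I would dispose of the base case: if $\P$ is $CI$, then by Remark~\ref{r:strong} and Theorem~\ref{t:strong-property} every associated poset is $CI$ and hence CPT, so we may assume $\P$ is not $CI$. By Theorem~\ref{t:dually-nprime} this forces the quotient poset $\H=\P/\mathcal{M}(\P)$ to be prime, which places us exactly in the hypothesis of Theorem~\ref{t:orientations}.

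The core of the argument is an induction on the number of vertices, structured along the modular decomposition. By Theorem~\ref{t:orientations}, since the quotient $\H$ is prime, any $\Q$ associated to $\P$ is obtained by substituting into either $\H$ or $\H^d$ posets $\Q_i$ associated to the maximal strong modules $\P_i=\P(M_i)$. By Theorem~\ref{t:dually-nprime} each $\P_i$ is $CI$, and hence by Remark~\ref{r:strong} each associated $\Q_i$ is also $CI$, and in particular CPT. Thus it remains to assemble these $CI$ pieces inside a representation of the prime quotient. The plan is to start from a \emph{normalized representation} $\Model{\P}$, whose existence is guaranteed by Theorem~\ref{thm:StandardizedRepresentation}: in such a representation each strong module either is represented so that none of its paths end on a trivial path, or is a clique blocked module. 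Deleting all but one path per maximal strong module yields a representation $\Model{\H}$ of the quotient with the same favourable property.

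Next I would carry out the substitution uniformly. For a representative $\Path{v_0}$ of a maximal strong module that is not clique blocked, Lemma~\ref{l:substitution-1} lets me excise $\Path{v_0}$ and embed in its place a $CI$ representation of the desired $\Q_i$ (via the Replacement process, using Proposition~\ref{p:compresing-CI-model} to arrange the needed intersecting edge), while preserving exactly the containment relations between the module and the outside. For a clique blocked module, Remark~\ref{rem:CliqueBlocked} tells us the module contains no further strong module, so $\Q_i$ is a total order; Lemma~\ref{l:substitution-3} then re-embeds the appropriate clique of paths. Performing these independent local replacements over all $M_i$ produces a CPT representation realizing the substitution of the $\Q_i$ into $\H$. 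To handle the case $\K=\H^d$, I note that since $\P$ is dually-CPT, $\P^d$ is also CPT and likewise dually-CPT, so the identical normalization-and-substitution machinery applied to a normalized representation of $\P^d$ produces the substitution into $\H^d$; combining both cases covers every $\Q$ permitted by Theorem~\ref{t:orientations}.

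The main obstacle, and the step deserving the most care, is verifying that the local substitutions are genuinely \emph{non-interfering}: each replacement must preserve all containment relations between the module being substituted and \emph{every} element outside it, including elements that lie in other modules being simultaneously replaced. This is exactly what Lemmas~\ref{l:substitution-1} and~\ref{l:substitution-3} assert at the level of a single module, but the global argument requires that the surgeries act on disjoint regions of the host tree and that the ``no path ends on a trivial path'' property of the normalized representation survives each edge subdivision. The earlier sections are precisely engineered to guarantee this, so the remaining work is to check that the union of the local modifications yields a single consistent CPT model of $\Q$, which then completes the proof.
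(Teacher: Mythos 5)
Your proposal is correct and follows essentially the same route as the paper: reduce to the case of a prime quotient via Theorem~\ref{t:dually-nprime}, start from a normalized representation (Theorem~\ref{thm:StandardizedRepresentation}), keep one path per maximal strong module, and substitute CI representations of the $\Q_i$ using Lemma~\ref{l:substitution-1} for non-blocked modules and Lemma~\ref{l:substitution-3} with Remark~\ref{rem:CliqueBlocked} for blocked clique modules, handling $\K=\H^d$ by duality. The only cosmetic difference is your framing as an ``induction on the number of vertices,'' which the argument never actually needs, since the maximal strong modules are already CI and the substitution is a single-level operation, exactly as in the paper.
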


%pppppppppppppppppppppppppppppppppppppppppppppppppppppppppppppp
\begin{proof}

Let $\H=\P/\mathcal{M}(\P)$ be the quotient poset,
where $\mathcal{M}(\P)=\left\{M_1, \ldots, M_k\right\}$ is the maximal modular partition
of $\P$.

Since $\P$ is a dually-$CPT$ poset and $\H$ is a
subposet of $\P$, then $\H$ and $\H^d$ admit a normalized $CPT$-representation.
If $\H$ is $CI$, by Remark \ref{r:strong}  and Theorem \ref{t:dually-nprime}, $\P$ is $CI$ and
so strongly-$CPT$.
Thus let us assume that $\H$ is a prime dually-$CPT$ poset.

Let $\Q$ be an associated poset of $\P$ and let $\K$ be its quotient
poset.
Since $\P$ and $\Q$ are associated, an immediate
consequence is that $\H$ and $\K$ are associated;
in addition by hypothesis they are both prime,
hence by Theorem \ref{t:indecommposable-poset},
$\K$ is either equal to $\H$ or to $\H^{d}$.
Let us assume, \emph{w.l.o.g.}, that $\H=\K$.

We will prove that $\Q$ admits a $CPT$ representation.
By Theorem \ref{t:orientations} and \emph{w.l.o.g}, we assume that $\Q$
is obtained by replacing in $\H$ each vertex $v_i$
of $\H$ for $\Q_i=\Q(M_i)$.
By Proposition \ref{p:associated-posets}, $\P$ and
$\Q$ possess the same strong modules and
by Theorem \ref{t:dually-nprime} since $\P$ is dually-CPT,
all the strong modules of $\P$ and $\Q$ are CI.
For each $\Q_i$ we have a $CI$ representation.

Let $\Model{\H}$ be a $CPT$ representation of $\H$,
obtained from a normalized representation of $\P$.
The representation is obtained by only keeping one
path for each strong module of $\P$.

For each path $\Path{v_i}$ that corresponds to a module
$M_i$ of $\Q$, if $\Path{v_i}$ does not end on a trivial
path of $\Model{\H}$ then it corresponds to a  module
which is not a blocked clique module, hence
by Lemma \ref{l:substitution-1}, we can replace $\Path{v_i}$
by a CI representation of $\Q_{i}$.

The only remaining case is if $\Path{v_i}$ ends on a trivial
path in $\Model{\H}$. In that case, it means that
it corresponds to a blocked clique module of $\P$ in the representation
$\Model{\P}$. Hence by Lemma \ref{l:substitution-3}, we can
replace $\Path{v_i}$ by a CI representation
of the maximal strong clique module $\Q_i$.

By proceeding in that way for each maximal strong module,
we are able to obtain a CPT representation   $\Model{\Q}$ of $\Q$.
\end{proof}

%%%%%%%%%%%%%%%%%%%%%%%%%%%%%%  REFERENCES  %%%%%%%%%%%%%%%%%%%%%%%%%%%%%%%%%%%%%%%%%%%
\begin{thebibliography}{99}

%\bibitem{ALC-GUD-GUT}
%L. Alc\'on, N. Gudi\~{n}o, M. Gutierrez,
% \emph{On contaiment graphs of paths in a tree},
%Electronic Notes in Discrete Mathematics.

\bibitem{ALC-GUD-GUT-2}
L. Alc\'on, N. Gudi\~{n}o and M. Gutierrez,
    Recent results on containment graphs of paths in a tree,
    \emph{Discrete Applied Math.} 245 (2018), 139--147.

\bibitem{AGG-k-tree}
L. Alc\'on, N. Gudi\~{n}o and M. Gutierrez,
    On $k$-tree containment graphs of paths in a tree,
    \emph{Order} 38 (2020),  229--244.

% \bibitem{AIG-PRI} M. Aigner, G. Prins,
%       Uniquely partially orderable graphs, J.
%       London Math. Soc. (2) 3 (1971),  260--266.

%\bibitem{graphclass}
%A. Brandst\"{a}dt, V. Le and J. Spinrad,
% \emph{Graph classes: a survey},
%(1999), Society for Industrial and Applied Mathematics,
%Philadelphia.

\bibitem{BrWi89}
G. R. Brightwell and P. Winkler,
     Sphere orders,
     \emph{Order} 6 (1989), 235--240.

\bibitem{DU-MI-41}
B. Dushnik and E. W. Miller,
    Partially ordered sets,
    \emph{American Journal of Math.} 63 (1941),  600--610.

\bibitem{Fi88}
P. C. Fishburn,
    Interval orders and circle orders,
        \emph{Order} 5 (1988), 225--234.

\bibitem{Fi89}
P. C. Fishburn,
    Circle orders and angle orders,
        \emph{Order} 6 (1989), 39--47.

\bibitem{GA-67}
T. Gallai,
    Transitiv orientierbare graphen,
        \emph{Acta Mathematica Hungarica} 18 (1967),  25--66.

\bibitem{GH-HO-62}
A. Ghouila-Houri,
    Caract\'erisation des graphes non orient\'es dont on
    peut orienter les arr\^{e}tes de mani\`ere \`a
    obtenir le graphe d'une relation d'ordre,
    \emph{C. R. Acad. Sci. Paris} 254 (1962),  1370--1371.

\bibitem{Go2004}
M. C. Golumbic,
    \textit{Algorithmic Graph Theory and Perfect Graphs}, second
    edition,  Elsevier, 2004.

\bibitem{GO-84}
M. C. Golumbic,
    Containment graphs and intersection graphs,
    \emph{NATO Advanced Institute on Ordered Sets},
    Banff, Canada, May 1984 (abstract only); full version in
    IBM Israel Technical Report 88.135, July 1984.

\bibitem{GolumbicL21}
M. C. Golumbic and V. Limouzy,
    Containment graphs and posets of paths in a tree: wheels and partial wheels,
    \emph{Order} 32 (2021), 37--48.

\bibitem{GoSc89}
M.~C.~Golumbic and E.~R.~Scheinerman,
    Containment graphs, posets and related classes of graphs,
        \emph{Ann. N.Y. Acad. Sci.} 555 (1989), 192--204.

%\bibitem{GO-TR-04}
%M.C. Golumbic, A.N. Trenk, Tolerance graphs, Cambridge University
%Press, (2004).

%\bibitem{G_R} G.B. Guenver, J.X. Rampon, Split orders, Discrete
%Mathematics 276(1-3), (2004),  249--267.

%\bibitem{KE-77}
%D. Kelly, The 3-irreducible partially ordered sets, Canadian
%Journal of  Mathematics 29,  (1977),  367-–383.

%\bibitem{PN-LE-EV-71}
%A. Pnueli, A. Lempel, S. Even, Transitive orientation of graphs
%and identification of permutation graphs, Canadian Journal of
%Mathematics 23, (1971),  160--175.

%\bibitem{SP-03}
%J. Spinrad, Efficient Graph Representations, AMS, (2003).

%\bibitem{TR-MO-SU} W. T. Trotter, Jr., John I. Moore, Jr., David P. Sumner,
% Proceedings of the American Mathematical Society, Vol. 60, No. 1 (1976),  %35--38.

\bibitem{MajumderMR21}
A. Majumder, R. Mathew and D. Rajendraprasad,
    Dimension of {CPT} posets,
        \emph{Order} 31 (2021),  13-19.

\bibitem{NiMaNa88}
M. V. Nirkhe, S. Masuda and K. Nakajima,
    Circular-arc containment graphs,
        University of Maryland, Technical Report SRC TR 88-53
        (1988).

\bibitem{RoUr82}
D. Rotem and J. Urrutia,
	Circular permutation graphs,
        \emph{Networks} 12 (1982), 429--437.

\end {thebibliography}

\end{document}